\documentclass{article}

    \usepackage[final]{neurips_2023}

\usepackage[utf8]{inputenc} %
\usepackage[T1]{fontenc}    %
\usepackage{hyperref}       %
\usepackage{url}            %
\usepackage{booktabs}       %
\usepackage{amsfonts}       %
\usepackage{amssymb}
\usepackage{nicefrac}       %
\usepackage{microtype}      %
\usepackage{xcolor}         %

\usepackage{algorithm, algorithmic}
\usepackage{natbib}
\usepackage{graphicx}
\usepackage{subfigure}
\usepackage{multirow}
\usepackage{amsmath, bm, amsthm}
\usepackage{wrapfig}
\usepackage{xfrac}
\usepackage{sidecap}
\usepackage{adjustbox}
\usepackage{bbm}
\usepackage{xcolor}
\usepackage{cancel}
\usepackage[capitalise,noabbrev]{cleveref}

\usepackage{thmtools}
\usepackage{thm-restate}
\global\long\def\th{\theta}%
\global\long\def\pa{\partial}%
\global\long\def\pafss#1#2{\frac{\pa^{2}}{\pa#1\,\pa#2}}%
\global\long\def\dx{\,\mathrm{d}x}%
\global\long\def\dy{\,\mathrm{d}y}%
\global\long\def\dt{\,\mathrm{d}t}%
\global\long\def\dd{\mathrm{d}}%
\global\long\def\argmin#1{\mathop{\arg\,\min}_{#1}\,}%
\global\long\def\paf#1{\frac{\pa}{\pa#1}}%
\global\long\def\pat{\frac{\partial}{\partial t}}%
\global\long\def\th{\theta}%
\global\long\def\gr{\nabla}%
\global\long\def\divx#1{\nabla_x\cdot\left(#1\right)}%
\global\long\def\F{\mathcal{F}}%
\let\originalleft\left
\let\originalright\right
\renewcommand{\left}{\mathopen{}\mathclose\bgroup\originalleft}
\renewcommand{\right}{\aftergroup\egroup\originalright}
\global\long\def\fr#1#2{\frac{#1}{#2}}%

\global\long\def\norm#1{\left\lVert #1\right\rVert }%
\global\long\def\pa{\mathbf{\partial}}%
\global\long\def\im{\implies}%
\global\long\def\fr#1#2{\frac{#1}{#2}}%
\newcommand{\deriv}[2]{\frac{\partial #1}{\partial #2}}

\newcommand{\KL}[2]{D_{\mathrm{KL}}\left(#1\middle\Vert#2\right)}
\newcommand{\W}[2]{W_{2}\left(#1,#2\right)}
\newcommand{\WC}[2]{W_{c}\left(#1,#2\right)}
\newcommand{\WCC}[2]{W_{c}^2\left(#1,#2\right)}
\newcommand{\WFR}[2]{\mathrm{WFR}_\lambda\left(#1,#2\right)}

\newcommand{\mean}{\mathbb{E}}%
\newcommand{\var}{{\rm I\kern-.3em D}}

\newcommand{\inner}[2]{\left\langle #1, #2\right\rangle}
\newcommand{\eps}{\varepsilon}
\newtheorem*{theorem*}{Theorem}
\newtheorem*{proposition*}{Proposition}

\newtheorem*{example*}{Example}
\DeclareMathSymbol{\shortminus}{\mathbin}{AMSa}{"39}

\usepackage{color}

\definecolor{green(ryb)}{rgb}{0.4, 0.69, 0.2} %

\definecolor{bleudefrance}{rgb}{0.19, 0.55, 0.91}
\definecolor{azure(colorwheel)}{rgb}{0.0, 0.5, 1.0}
\newcommand{\hl}[1]{\textcolor{bleudefrance}{#1}}

\newcommand{\abs}[1]{\left|#1\right|}
\newcommand{\pluseq}{\mathrel{{+}{=}}}

\hypersetup{
    colorlinks,
    linkcolor={blue!50!black},
    citecolor={blue!50!black},
    urlcolor={blue!80!black},
}

\theoremstyle{plain}
\newtheorem{theorem}{Theorem}[section]
\newtheorem{proposition}[theorem]{Proposition}

\newtheorem{corollary}[theorem]{Corollary}
\theoremstyle{definition}

\theoremstyle{remark}

\usepackage{empheq}
\definecolor{myblue}{rgb}{.8, .8, 1}

\Crefname{equation}{Eq.}{Eqs.}

\title{Wasserstein Quantum Monte Carlo: \\ A Novel Approach for Solving \\the Quantum Many-Body Schrödinger Equation}

\author{%
  Kirill Neklyudov \\
  \hspace{1cm}Vector Institute\hspace{1cm}
  \And
  Jannes Nys \\
  Institute of Physics \&
  Center for Quantum Science and Engineering\\ \'{E}cole Polytechnique F\'{e}d\'{e}rale de Lausanne (EPFL)
  \And
  Luca Thiede \\
  Vector Institute \\
  Univeristy of Toronto
  \AND
  Juan Felipe Carrasquilla \\
  Vector Institute \\
  Univeristy of Waterloo
  \And
  Qiang Liu \\
  UT Austin
  \And
  Max Welling \\
  Microsoft Research\\ AI4Science
  \And
  Alireza Makhzani\\
  Vector Institute \\
  Univeristy of Toronto
}

\begin{document}

\maketitle

\begin{abstract}
Solving the quantum many-body Schrödinger equation is a fundamental and challenging problem in the fields of quantum physics, quantum chemistry, and material sciences. One of the common computational approaches to this problem is Quantum Variational Monte Carlo (QVMC), in which ground-state solutions are obtained by minimizing the energy of the system within a restricted family of parameterized wave functions. Deep learning methods partially address the limitations of traditional QVMC by representing a rich family of wave functions in terms of neural networks. However, the optimization objective in QVMC remains notoriously hard to minimize and requires second-order optimization methods such as natural gradient. In this paper, we first reformulate energy functional minimization in the space of Born distributions corresponding to particle-permutation (anti-)symmetric wave functions, rather than the space of wave functions. We then interpret QVMC as the Fisher--Rao gradient flow in this distributional space, followed by a projection step onto the variational manifold. This perspective provides us with a principled framework to derive new QMC algorithms, by endowing the distributional space with better metrics, and following the projected gradient flow induced by those metrics. More specifically, we propose ``Wasserstein Quantum Monte Carlo'' (WQMC), which uses the gradient flow induced by the Wasserstein metric, rather than Fisher--Rao metric, and corresponds to \emph{transporting} the probability mass, rather than \emph{teleporting} it. We demonstrate empirically that the dynamics of WQMC results in faster convergence to the ground state of molecular systems.
\end{abstract}

\section{Introduction}

Access to the wave function of a quantum many-body system allows us to study strongly correlated quantum matter, starting from the fundamental building blocks. For example, the solution of the time-independent electronic Schrödinger equation provides all the chemical properties of a given atomic state, which have numerous applications in chemistry and materials design.
However, obtaining the exact wave function is fundamentally challenging, with a complexity scaling exponentially with the number of degrees of freedom. Various computational techniques have been developed in the past, including compression techniques based on Tensor Networks \citep{white1992density}, and stochastic approaches such as Quantum Monte Carlo (QMC)~\citep{ceperley1977monte}. Quantum Variational Monte Carlo (QVMC) \citep{mcmillan1965ground, ceperley1977monte} is a well-known subclass of the latter that can, in principle, be used to estimate the lowest-energy state (i.e.\ ground state) of a quantum many-body system.
The method operates by parameterizing the trial wave function and minimizing the energy of the many-body system w.r.t. the model parameters. 

The choice of parametric family of the trial wave function is a crucial component of the QVMC framework.
Naturally, deep neural networks, being a family of universal approximators, have demonstrated promising results for quantum systems with discrete \citep{carleo2017solving, choo2020fermionic, hibat2020recurrent}, as well as continuous degrees of freedom \citep{pfau2020ab, hermann2020deep, pescia2022neural, gnech2022nuclei, von2022self}.
However, the optimization process is challenging, especially for rich parametric families of the trial wave functions.
This requires the use of advanced optimization techniques that take into account the geometry of the parametric manifold. The most common technique used in QVMC is referred to as `Stochastic Reconfiguration' (SR) ~\citep{sorella1998green}, and can be seen as the quantum version of Natural Gradient Descent~\citep{stokes2020quantum}. While for large neural networks with up to millions of parameters, efficient and scalable implementations of SR are available~\citep{vicentini2022netket}, it is also possible to use approximate methods such as K-FAC~\citep{martens2015optimizing, pfau2020ab}.
Higher order optimization techniques are considered to be essential to obtain the necessary optimization performance to accurately estimate ground states of quantum many-body Hamiltonians (see e.g. ~\citep{pescia2023messagepassing, pfau2020ab}).
Therefore, studies of the optimization procedure are an important direction for further development of the QVMC approach.

In this paper, we consider the energy minimization dynamics as a gradient flow on the non-parametric manifold of distributions.
First, as an example of the proposed methodology, we demonstrate that the imaginary-time Schr\"odinger equation can be described as the gradient flow under the Fisher--Rao metric on the non-parametric manifold.
Then, the QVMC algorithm can be seen as a projection of this gradient flow onto a parametric manifold (see \cref{sec:methodology} for details).
Second, the gradient flow perspective gives us an additional degree of freedom in the algorithm.
Namely, we can choose the metric under which we define the gradient flow.
Thus, we propose and study a different energy-minimizing objective function, which we derive as a gradient flow under the Wasserstein metric (or Wasserstein Fisher--Rao metric) \citep{chizat2018interpolating, kondratyev2016new}.

In practice, we demonstrate that incorporating the Wasserstein metric into the optimization procedure allows for faster convergence to the ground state.
Namely, we demonstrate up to $10$ times faster convergence of the variance of the local energy for chemical systems.
Intuitively, incorporating the Wasserstein metric regularizes the density evolution by forbidding or regularizing non-local probability mass ``teleportation'' (as done by Fisher--Rao metric).
This might facilitate faster mixing of the MCMC running along with the density updates.

\section{Background}
 
\subsection{Quantum variational Monte Carlo}\label{sec:background}

Consider a quantum many-body system subject to the Hamiltonian operator, which we will assume to be of the following form,
\begin{align}
    H &= -\frac{1}{2} \nabla_x^2 + V .
\end{align}
where $x$ a given many-body configuration and $V$ is the potential operator. The time-dependent Schrödinger equation determines the wave function $\psi(x, t)$ of the quantum system
\begin{align}
    i \frac{\dd}{\dd t}\psi(x,t) = H \psi(x,t) \label{eq:tdse}
\end{align}
As is often the case, we will target the stationary solutions, for which we focus on solving the time-independent Schrödinger equation
\begin{align}
    H\psi(x) = E\psi(x) \label{eq:tise}
\end{align}
where $E$ is the energy of the state $\psi$. The ground state of a quantum system is obtained by solving the time-independent Schr\"odinger equation, by targeting the eigenstate $\psi$ of the above Hamiltonian with the lowest eigenvalue (energy) $E$.  Hereby, we must restrict the Hilbert space to  wave functions that are antisymmetric under particle permutations in the case of fermionic particles, and symmetric for bosons. The latter takes into account the indistinguishability of the particles.
Given the Born density $q(x) = |\psi(x)|^2$, the energy of a given quantum state can be rewritten in a functional form,
\begin{align}
    E[\psi] = \mean_{q(x)}\left[E_{\textrm{loc}}(x)\right], \;\;\; E_{\textrm{loc}}(x) \coloneqq \frac{\left[H \psi\right](x)}{\psi(x)}
\end{align}
We will focus on the case where the Hamiltonian operator is Hermitian and time-reversal symmetric. In this case, its eigenfunctions and eigenvalues are real, and the energy can be recast into a functional of the Born probability density (see also ~\cite{pfau2020ab}, where the expressions are given in terms of $\log \abs{\psi}$)
\begin{align}
    E[q] = \mean_{q(x)}\left[E_{\textrm{loc}}(x)\right], \;\;\; E_{\textrm{loc}}(x) = V(x) -\frac{1}{4} \nabla_x^2 \log q(x) - \frac{1}{8} \norm{\nabla_x \log q(x)}^2, \label{eq:eloc_q}
\end{align}
under the strong condition that $q(x)$ is the Born probability density derived from an (anti-)symmetric wave function: $q(x) = \psi^2(x)$. The latter will always be tacitly assumed from hereon.

The Rayleigh–Ritz principle guarantees that the $E[q]$ is lower-bounded by the true ground-state energy of the system, i.e.\ $E[q] \geq E_0$, if the corresponding wave function $\psi$ is a valid state of the corresponding Hilbert space.
Quantum Variational Monte Carlo (QVMC) targets ground states by parametrizing the wavefunction $\psi(x,\theta)$ and by minimizing $E[q(\theta)]$. The solution to the minimization problem $\theta_0 = \argmin{\theta} E[q(\theta)]$ is obtained by gradient-based methods using the following expression for the gradient w.r.t. parameters $\theta$
\begin{align}
    \nabla_\theta E[q(\theta)] = \mean_{q(x,\theta)} \left[ \bigg(E_{\textrm{loc}}(x, \theta) - \mean_{q(x,\theta)} \left[E_{\textrm{loc}}(x, \theta)\right]\bigg) \nabla_\theta\log q(x,\theta) \right].\label{eq:vmc_grad}
\end{align}
In sum, the above leads to an iterative procedure in which Monte Carlo sampling is used to generate configurations from the current trial state $q(x, \theta) = \psi^2(x, \theta)$, which allows computing the corresponding energy and its parameter gradients, and to update the model accordingly.
In practice, the parametric model specifies the density $q(x,\theta)$ only up to a normalization constant, i.e., it outputs $\tilde{q}(x,\theta) \propto q(x,\theta)$.
However, the gradient w.r.t. $\theta$ does not depend on the normalization constant; hence, throughout the paper, we refer to the model as the normalized density $q(x,\theta)$.

\subsection{Gradient flows under the Wasserstein Fisher--Rao metric}
In the previous section, we introduced QVMC in terms of Born probability functions and formulated the problem as the minimization of a functional of probability functions constrained to a variational/parametric manifold. The latter is a more common problem often tackled in machine learning, and by forging connections between both fields, we will be able to derive an alternative to QVMC.
\label{sec:gf_wfr}

\paragraph{Gradient Flows} In Euclidean space, we can minimize a function $f:\mathbb{R}^d \to \mathbb{R}$ by following the ODE $\frac{\dd}{\dd t}x_{t}=-\nabla_x f\left(x_{t}\right)$, which can be viewed as the continuous version of standard gradient descent. Similarly, we can minimize a functional in the space of probability distributions (or in general any Riemannian manifold), by following an ODE on this manifold. However the notion of a gradient on a manifold is more complicated, and relies on the Riemannian metric that the manifold is endowed with. Different Riemannian metrics result in different gradient flows, and consequently different optimization dynamics. For a thorough analysis of gradient flows, we refer the reader to \citet{ambrosio2005gradient}.

\paragraph{Wasserstein Fisher--Rao gradient flows} Consider the space of distributions $\mathcal{P}_2$ with finite second moment.
This space can be endowed with a Wasserstein Fisher--Rao metric with the corresponding distance.
In particular, the \emph{Wasserstein Fisher--Rao} (WFR) distance \citep{chizat2018interpolating} is defined by extending the \citet{benamou2000computational} dynamical optimal transport formulation by a term involving the norm of the growth rate $g_t$, and by accounting for the growth term in the modified continuity equation.
Namely, the distance between probability densities $p_0$ and $p_1$ is defined as
\begin{align}
   \WFR{p_0}{p_1}^2 &\coloneqq \inf_{v_t,g_t,q_t}  \int_0^1\; \mathbb{E}_{q_t(x)} 
    \left[ \norm{v_t(x)}^2 + \lambda  g_t(x)^2 \right] \dt,\;\;\text{ subj. to} \label{eq:unbalanced_vt_ot} \\
    \deriv{q_t(x)}{t} &= 
    - \divx{q_t(x) v_t(x)} 
    + g_t(x)q_t(x)\,,  \quad \text{ and } q_0(x) = p_0(x), \;\; q_1(x)  = p_1(x)\,,\nonumber 
\end{align}
where $v_t(x)$ is the vector field defining the probability flow, $g_t(x)$ is the growth term controlling the creation and annihilation of the probability mass, and $\lambda$ is the coefficient balancing the transportation and teleportation costs.
Note that by setting one of the terms to zero we get 2-Wasserstein distance ($g_t(x) \equiv 0$) and Fisher--Rao distance ($v_t(x) \equiv 0$).
In \cref{sec:methodology}, we also consider the general case of $c$-Wasserstein distance, where $c$ is a convex cost function on the tangent space.

Given a functional on this manifold, $F[q]: \mathcal{P}_2 \to \mathbb{R}$, we can define the gradient flow of the function $F$ under any Riemannian metric including the Wasserstein metric, the Fisher--Rao metric, or the Wasserstein Fisher--Rao metric.
For example, the gradient flow that minimizes the functional $F[q]$ under the Wasserstein Fisher--Rao metric is given by the following PDE (which is shown with detailed derivations in \cref{app:w2gd})

\begin{align}
    \deriv{q_t}{t}(x) = \underbrace{-\divx{q_t(x)\bigg(-\nabla_x \frac{\delta F[q_t]}{\delta q_t}(x)\bigg)}}_{\text{the continuity equation}} - \frac{1}{\lambda}\underbrace{\bigg(\frac{\delta F[q_t]}{\delta q_t}(x) - \mean_{q_t(y)} \left[\frac{\delta F[q_t]}{\delta q_t}(y)\right]\bigg)}_{\text{growth term}} q_t(x), \label{eq:gradient_on_manifold}
\end{align}
where $\delta F[q]/ \delta q$ is the first-variation of of $F$ with respect to the $L_2$ metric.
The physical explanation of the terms in Eq.~\eqref{eq:gradient_on_manifold} is as follows.
The continuity equation defines the change of the density when the samples $x \sim q_t(x)$ follow the vector field $v_t(x) = -\nabla_x \delta F[q_t]/\delta q_t$.
The second term of the PDE defines the creation and annihilation of probability mass, and is proportional to the growth field $g_t(x) = \frac{\delta F[q_t]}{\delta q_t}(x) - \mean_{q_t(y)} \left[\frac{\delta F[q_t]}{\delta q_t}(y)\right]$. Note that $\mean_{q_t} \left[g_t\right]=0$, and so while mass can be ``teleported'', the total mass (or probability) will remain constant.
The two mechanisms can be considered independently by defining the evolution terms under the 2-Wasserstein and Fisher--Rao metrics respectively, i.e.\
\begin{align}
    \deriv{q_t}{t}(x) &= {-\divx{q_t(x)\bigg(-\nabla_x \frac{\delta F[q_t]}{\delta q_t}(x)\bigg)}}, \label{eq:w2_evolution} & \text{2-Wasserstein Gradient Flow,}\;\; \\ 
    \deriv{q_t}{t}(x) &= {- \bigg(\frac{\delta F[q_t]}{\delta q_t}(x) - \mean_{q_t(y)} \left[\frac{\delta F[q_t]}{\delta q_t}(y)\right]\bigg) q_t(x)}, & \text{Fisher--Rao Gradient Flow} \label{eq:fisher_rao_evolution}.
\end{align}

It now becomes evident that the stationary condition for all the considered PDEs is
\begin{align}
    \norm{\nabla_x \frac{\delta F[q_t]}{\delta q_t}(x)} = 0 \iff \frac{\delta F[q_t]}{\delta q_t}(x) \equiv \text{constant}\,.
    \label{eq:constant_derivative}
\end{align}

In \cref{app:w2gd}, we provide derivations illustrating that \cref{eq:gradient_on_manifold,eq:w2_evolution,eq:fisher_rao_evolution} correspond to the gradient flow under the Wasserstein Fisher--Rao, Wasserstein, and Fisher--Rao metrics, respectively, and hence they all minimize $F[q]$.
For detailed analysis, we refer the reader to \citet{kondratyev2016new, liero2016optimal}.

\section{Methodology}
\label{sec:methodology}

In \cref{sec:imaginary}, we first demonstrate that the imaginary-time evolution of the Schr\"odinger equation can be viewed as a gradient flow under the Fisher--Rao metric.
Afterwards, in \cref{sec:projection}, we discuss how a density evolution can be projected to the parametric variational family and show that doing so for the Fisher--Rao gradient flow yields the QVMC algorithm.
Taking this perspective, we propose the Wasserstein Quantum Monte Carlo by considering Wasserstein (and Wasserstein Fisher--Rao) gradient flows, followed by the projection onto the parametric manifold (see \cref{sec:proposed_method}).

\subsection{Imaginary-Time evolution as the gradient flow under the Fisher--Rao metric}\label{sec:imaginary}
The ground state of a quantum system can in theory be obtained by imaginary-time evolving any valid quantum state $\psi$ (with a non-vanishing overlap with the true ground state) to infinite times. The state is evolved according to the imaginary-time Schrödinger equation, which defines the energy-minimizing time evolution of the wavefunction $\psi_t$, and is expressed as the following PDE (which is the Wick-rotated version of Eq.~\eqref{eq:tdse}, see e.g.~\citep{mcardle2019variational, yuan2019theory}),
\begin{align}
    \deriv{\psi_t(x)}{t} = ~& -\left( H - E[\psi_t] \right)\psi_t(x),
    \label{eq:imaginary_time}
\end{align}
where again $q_t(x) = \psi_t^2(x)$. The last term proportional to the energy $E[\psi_t]$ comes from enforcing normalization (contrary to real-time evolution, imaginary time evolution is non-unitary).

\begin{theorem}
\label{prop:fisher_rao_gf}
    \cref{eq:imaginary_time} defines the gradient flow of the energy functional $\mathbb{E}[q]$ under the Fisher--Rao metric.
\end{theorem}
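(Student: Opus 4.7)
The plan is to translate the imaginary-time Schr\"odinger equation into an evolution equation for the density $q_t=\psi_t^2$, then identify the right-hand side as the Fisher--Rao velocity field \eqref{eq:fisher_rao_evolution} for the functional $E[q]$. The identification reduces to two computations: converting $\partial_t\psi_t$ into $\partial_t q_t$, and computing the first variation $\delta E[q]/\delta q$, which should come out exactly to $E_{\mathrm{loc}}$ (possibly up to the overall factor that distinguishes the two sides).

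\textbf{Step 1: Rewrite the PDE in terms of $q_t$.} Using $q_t=\psi_t^2$, differentiate to get $\partial_t q_t = 2\psi_t\,\partial_t\psi_t$. Substituting \eqref{eq:imaginary_time} gives $\partial_t q_t(x) = -2\psi_t(x)(H\psi_t)(x) + 2E[\psi_t]\,q_t(x)$, and using the definition $E_{\mathrm{loc}}(x)=(H\psi_t)(x)/\psi_t(x)$ together with $E[\psi_t]=\mean_{q_t}[E_{\mathrm{loc}}]$ yields
\begin{align*}
\deriv{q_t(x)}{t}
  = -2\,\bigl(E_{\mathrm{loc}}(x) - \mean_{q_t}[E_{\mathrm{loc}}]\bigr)\, q_t(x).
\end{align*}
This already has the structural form of \eqref{eq:fisher_rao_evolution}; it remains only to match the bracket to $\delta E[q]/\delta q$ modulo a constant.

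\textbf{Step 2: Compute $\delta E[q]/\delta q$.} Using the form given in \eqref{eq:eloc_q}, and after integration by parts converting $\mean_q[\nabla_x^2\log q]$ and $\mean_q[\|\nabla_x\log q\|^2]$ into an expression of the Fisher--information type, $E[q]$ can be written as $\int V q\,dx + \tfrac{1}{8}\int \|\nabla_x q\|^2/q\,dx$. Taking the first variation of the kinetic term (again by integration by parts, assuming the usual decay of $q$ at infinity so boundary terms vanish) gives $\tfrac{1}{8}\bigl(-2\nabla_x^2\log q - \|\nabla_x\log q\|^2\bigr)$, so that
\begin{align*}
\frac{\delta E[q]}{\delta q}(x) = V(x) - \tfrac{1}{4}\nabla_x^2\log q(x) - \tfrac{1}{8}\|\nabla_x\log q(x)\|^2 = E_{\mathrm{loc}}(x).
\end{align*}

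\textbf{Step 3: Match with the Fisher--Rao flow.} Substituting Step~2 into the Fisher--Rao gradient flow \eqref{eq:fisher_rao_evolution} gives exactly the right-hand side obtained in Step~1, up to the overall factor of $2$ which amounts to a constant time reparametrization $t\mapsto t/2$ and does not alter the trajectory as a curve on the manifold (and corresponds merely to a convention in the normalization of the Fisher--Rao metric). Hence the imaginary-time Schr\"odinger equation is, up to this rescaling, the Fisher--Rao gradient flow of~$E[q]$.

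\textbf{Main obstacle.} The most delicate step is the first-variation calculation of Step~2: one must carefully integrate by parts twice and combine the logarithmic-derivative terms so that the apparent second-derivative terms in $\log q$ reappear with the correct coefficients. A clean way to avoid errors is to begin from the equivalent ``bosonic'' form $E[q]=\int V q + \tfrac{1}{8}\int \|\nabla q\|^2/q$, for which both $\delta/\delta q$ of the potential and the (convex, well-studied) Fisher-information-like kinetic term are standard, and only at the end to rewrite the result in terms of $\log q$ to obtain $E_{\mathrm{loc}}$. Once this identity $\delta E/\delta q = E_{\mathrm{loc}}$ is in hand, the theorem is immediate from the arithmetic in Step~1.
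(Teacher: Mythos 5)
Your proposal is correct and follows essentially the same route as the paper's proof: compute the first variation $\delta E[q]/\delta q = E_{\mathrm{loc}}$ (the paper varies $\int q\,[V - \tfrac14\nabla_x^2\log q - \tfrac18\|\nabla_x\log q\|^2]\,dx$ directly and kills the cross terms by integration by parts, whereas you first rewrite the kinetic part in the equivalent Fisher-information form $\tfrac18\int\|\nabla_x q\|^2/q\,dx$ --- a cosmetic difference), then convert the imaginary-time equation into a PDE for $q_t=\psi_t^2$ and match it to the Fisher--Rao flow up to the factor of $2$, i.e.\ a time rescaling. The only thing worth making explicit is that your Step~1 silently invokes the identity $(H\psi)/\psi = V - \tfrac14\nabla_x^2\log q - \tfrac18\|\nabla_x\log q\|^2$ (Eq.~\eqref{eq:eloc_q}), which the paper's appendix proves via the identity for $\psi\nabla_x^2\psi$; since that equation is stated in the background, this is acceptable.
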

\begin{proof}[Proof Sketch]
    The energy functional $E[q]$ has the following derivative
    \begin{align}
        \frac{\delta E[q]}{\delta q}(x) = V(x) -\frac{1}{4} \nabla_x^2 \log q(x) - \frac{1}{8} \norm{\nabla_x \log q(x)}^2 = E_{\textrm{loc}}(x).
    \end{align}
    Thus, the gradient flow under the Fisher--Rao metric is (see Eq.~\eqref{eq:fisher_rao_evolution})
    \begin{align}
        \deriv{q_t(x)}{t} = -\bigg(E_{\textrm{loc}}(x) - \mean_{q_t(x)} \left[E_{\textrm{loc}}(x)\right]\bigg)q_t(x),
    \end{align}
    which is equivalent (up to a multiplicative constant) to the imaginary-time Schrödinger Equation in Eq.~\eqref{eq:imaginary_time} as shown in the complete proof in \cref{app:imaginary_time}.
\end{proof}
We believe that this result can be derived following the derivations from \cite{stokes2020quantum}, but not introducing the manifold of parametric distributions.
However, considering the evolution of the density on the non-parametric manifold first helps us to derive our method and relating it to QVMC.
In the following subsection, we discuss how to project this non-parametric evolution to a parametric manifold.

\subsection{Following the gradient flow by a parametric model}
\label{sec:projection}

\begin{figure}
\centering
    \includegraphics[width=.6\textwidth]{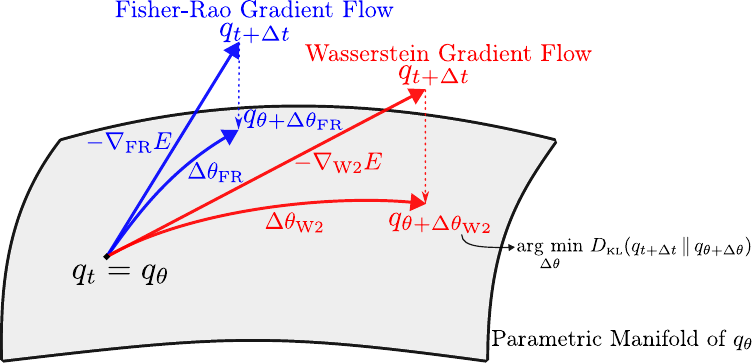}
    \caption{W(FR)QMC: A graphical illustration of the gradient flow according to the Wasserstein and Fisher--Rao metrics, and the corresponding projection onto the variational manifold $q(x,\theta)$.}\label{fig:manifold}
\end{figure}

By choosing a metric in the distributional space and following the energy-minimizing gradient flows, we can design various algorithms for estimating the ground state wave function.
Indeed, in principle, by propagating the samples or the density according to any gradient flow (e.g., \cref{eq:gradient_on_manifold,eq:w2_evolution,eq:fisher_rao_evolution}), we can eventually reach the ground state.
However, these dynamics are defined on the non-parametric and infinite-dimensional manifold of distributions, which do not allow tractable computation of log densities, and thus tractable evolution. Therefore, we project these dynamics onto the parametric manifold of our variational family, and follow the \emph{projected} gradient flows instead, which is tractable.

Suppose the current density on the parametric manifold is $q_t(x) =  q(x,\theta)$ (see \cref{fig:manifold}). We first evolve this density using a (non-parametric) gradient flow method (e.g., \cref{eq:gradient_on_manifold,eq:w2_evolution,eq:fisher_rao_evolution}) for time $\Delta t$, which will take $q_t(x)$ off the parametric manifold to $q_{t+\Delta t}(x)$. 
We then have to update current trial model $q(x,\theta)$ to match $q_{t+\Delta t}(x)$ enabling us to propagate the density further.
In order to do so, we define the optimal update of parameters $\Delta \theta^*$ as the minimizer of the Kullback-Leibler divergence between $q_{t+\Delta t}(x)$ and the distributions on the parametric manifold, i.e.
\begin{align}
    \Delta \theta^* = \argmin{\substack{\Delta\th \\
    \norm{\Delta\theta}_2=1}} 
    \KL{q_{t+\Delta t}(x)}{q(x,\theta + \Delta \theta)} \,.
    \label{eq:proj_loss}
\end{align}
In practice, we evaluate the parameters update using the expansion of the KL-divergence from the following proposition.

\begin{proposition}
\label{th:proj_grad}
    For $q_t(x) = q(x,\theta)$, the KL-divergence can be expanded as
    \begin{align}
    \begin{split}
        \KL{q_{t+\Delta t}(x)}{q(x,\th+\Delta\th)}=~&-\fr 12\Delta t\inner{\Delta\th}{\int\pat q_{t}(x)\gr_\theta\log q(x,{\th})\dx} \\ 
        ~&+ o\left(\sqrt{\Delta t^2 + \norm{\Delta \theta}^2}\right)\,,
    \end{split}
    \end{align}
    where $\inner{\cdot}{\cdot}$ denotes the inner product, and should not be confused with the bra-ket notation.
\end{proposition}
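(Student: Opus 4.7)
The plan is to Taylor-expand both arguments of the KL about the common base point $q_t(x) = q(x,\theta)$, then expand the KL functional itself in the resulting small pointwise difference. Two classical facts do the work: the second-order expansion $\KL{p}{q} = \tfrac{1}{2}\int (p-q)^2/q\,\dx + O(\norm{p-q}^3)$ (obtainable by expanding $\log(1+\eta/q)$ for $\eta = p-q$ and using $\int (p-q)\,\dx = 0$), together with the two linearizations $q_{t+\Delta t}(x) = q_t(x) + \Delta t\,\partial_t q_t(x) + o(\Delta t)$ and $q(x,\theta+\Delta\theta) = q_t(x) + \inner{\Delta\theta}{\nabla_\theta q(x,\theta)} + o(\norm{\Delta\theta})$.

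Substituting the two linearizations, the pointwise difference becomes $q_{t+\Delta t}(x) - q(x,\theta+\Delta\theta) = \Delta t\,\partial_t q_t(x) - \inner{\Delta\theta}{\nabla_\theta q(x,\theta)} + r(x)$, with $r$ uniformly of lower order. Squaring, dividing by $q_t$, and integrating produces three second-order contributions: a pure $\Delta t^2$ kinetic integral $\tfrac{1}{2}\int (\partial_t q_t)^2/q_t\,\dx$, a pure quadratic form in $\Delta\theta$ built from the Fisher-information matrix $\int \nabla_\theta q\otimes\nabla_\theta q/q_t\,\dx$, and the unique cross term $-\,\Delta t \inner{\Delta\theta}{\int \partial_t q_t(x)\,\nabla_\theta \log q(x,\theta)\,\dx}$, where I have used $\nabla_\theta q(x,\theta)/q_t(x) = \nabla_\theta \log q(x,\theta)$ since $q_t(x) = q(x,\theta)$. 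This cross term is the coupling of $\Delta t$ and $\Delta\theta$ that appears in the stated expansion; the overall numerical factor depends on how the constraint $\norm{\Delta\theta}_2 = 1$ from \cref{eq:proj_loss} is applied.

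Conceptually, the reason the expansion isolates only the cross term is that, under the unit-norm constraint on $\Delta\theta$, the pure $\Delta t^2$ piece is a direction-independent constant and the pure $\norm{\Delta\theta}^2$ piece is fixed by the sphere normalization (up to the usual Fisher-information geometry used in natural-gradient / SR). Thus the only contribution that actually selects the update direction at order $\Delta t$ is the cross term, which is exactly why the proposition states the expansion in that form — and it is this expression that will feed into the projected-gradient update derived in the next subsection.

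The main technical obstacle is uniform control of the remainder. One needs joint smoothness of $q(x,\theta)$ in $\theta$ and of $q_t$ in $t$, a finite Fisher information $\int \nabla_\theta q\otimes \nabla_\theta q/q_t\,\dx$, and a finite kinetic integral $\int (\partial_t q_t)^2/q_t\,\dx$; together these justify the exchange of integration with the Taylor expansions of $q$ and of the logarithm, and push the cubic remainders below the stated $o\bigl(\sqrt{\Delta t^2 + \norm{\Delta\theta}^2}\bigr)$ scale. None of these steps is deep individually; the care is in bookkeeping the two small parameters jointly so that the remainder is genuinely sub-linear in $\sqrt{\Delta t^2 + \norm{\Delta\theta}^2}$.
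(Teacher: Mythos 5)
Your argument is correct and arrives at the same cross term, but by a genuinely different route from the paper. The paper works directly with the map $(t,\theta)\mapsto\KL{q_t}{q(\cdot,\theta)}$: it checks that both first-order partial derivatives vanish at the base point $q_t=q(\cdot,\theta)$ (using $\int \pat q_t \dx=0$ and $\int \gr_\theta q(x,\theta)\dx=0$), computes the mixed second partial $\tfrac{\partial^2}{\partial\theta\,\partial t}\KL{q_t}{q(\cdot,\th)}=-\int\pat q_{t}(x)\,\gr_\theta\log q(x,\th)\dx$, and reads the claim off the second-order Taylor formula, silently dropping the diagonal blocks. You instead invoke the local $\chi^2$ expansion $\KL{p}{q}=\tfrac12\int(p-q)^2/q\,\dx+O(\norm{p-q}^3)$ and linearize both arguments about the common base point; expanding the square then produces all three second-order contributions explicitly, namely the $\Delta t^2$ kinetic integral, the Fisher quadratic form in $\Delta\theta$, and the cross term. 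The two computations are equivalent (your three terms are exactly the three second-order partials appearing in the paper's Hessian), but your version has the advantage of making visible the diagonal terms that the paper absorbs into the remainder, and of articulating why only the cross term selects the update direction at order $\Delta t$. One small discrepancy worth flagging: the exact Taylor coefficient of the cross term is $-\Delta t\inner{\Delta\th}{\cdot}$, since the factor $\tfrac12$ in the Taylor formula cancels against the multiplicity $2$ of the mixed term --- this is what your expansion yields --- whereas the paper's stated $-\tfrac12\Delta t\inner{\Delta\th}{\cdot}$ retains a spurious $\tfrac12$. This is immaterial downstream, because \cref{eq:l2_theta_update} uses the update only up to a positive scalar, as you correctly observe when you remark that the numerical factor does not affect the projected-gradient direction.
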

\vspace{-10pt}
\begin{proof}
    See \cref{app:proj_grad} for the proof.
\end{proof}
\vspace{-5pt}
Using this approximation, the optimal update from \cref{eq:proj_loss} of parameters becomes
\begin{align}
    \label{eq:l2_theta_update}
    \Delta \theta^* = \argmin{\substack{\Delta\th \\
    \norm{\Delta\theta}_2=1}}\inner{\Delta\th}{-\int\pat q_{t}(x)\gr_\theta\log q(x,{\th})\dx} \propto \int\pat q_{t}(x)\gr_\theta\log q(x,{\th})\dx\,.
\end{align}

The following Corollary states that QVMC can be viewed as the projected gradient flow of the energy functional with respect to the Fisher--Rao metric.
\begin{corollary}
    Consider the Fisher--Rao gradient flow (or imaginary time evolution, which is equivalent, as shown in \cref{prop:fisher_rao_gf}). Then, the parameters update (\cref{eq:l2_theta_update}) matches the gradient of the conventional QVMC loss, i.e.
\begin{align}
    \Delta \theta^* \propto -\mean_{q_t(x)} \left[\bigg(E_{\textrm{\emph{loc}}}(x, \theta) - \mean_{q_t(x)}\left[E_{\textrm{\emph{loc}}}(x, \theta)\right]\bigg) \nabla_\theta\log q(x,\theta)\right].
    \label{eq:fr_param_gradient}
\end{align}
\end{corollary}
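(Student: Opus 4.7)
The plan is to directly substitute the Fisher--Rao gradient flow PDE, established in \cref{prop:fisher_rao_gf}, into the general projected-update formula \cref{eq:l2_theta_update}, and recognize the resulting expression as the QVMC gradient of \cref{eq:vmc_grad}.

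First, I would recall from the proof of \cref{prop:fisher_rao_gf} that, evaluated at the current parametric density $q_t(x) = q(x,\theta)$, the first variation of the energy functional is precisely the local energy, $\delta E[q_t]/\delta q_t(x) = E_{\textrm{loc}}(x,\theta)$, and hence the Fisher--Rao gradient flow reads
\begin{align}
    \pat q_t(x) = -\bigg(E_{\textrm{loc}}(x,\theta) - \mean_{q_t(y)}\left[E_{\textrm{loc}}(y,\theta)\right]\bigg) q_t(x).
\end{align}
Note that the $\theta$-dependence of $E_{\textrm{loc}}$ appears only because we evaluate the first variation at $q_t = q(\cdot,\theta)$; from the point of view of the non-parametric flow it is still just a functional of $q_t$.

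Next, I would insert this expression into the general parameter update $\Delta \theta^* \propto \int \pat q_t(x) \gr_\theta \log q(x,\theta)\dx$ from \cref{eq:l2_theta_update}. Pulling out the factor $q_t(x)$ immediately converts the integral into an expectation under the current parametric density,
\begin{align}
    \Delta\theta^* \propto -\mean_{q_t(x)}\left[\bigg(E_{\textrm{loc}}(x,\theta) - \mean_{q_t(x)}\left[E_{\textrm{loc}}(x,\theta)\right]\bigg)\gr_\theta \log q(x,\theta)\right],
\end{align}
which is exactly the claimed expression, and matches (up to the expected minus sign for a descent direction) the QVMC parameter gradient in \cref{eq:vmc_grad}.

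There is no real obstacle here; the proof is essentially a one-line substitution followed by rewriting an integral as an expectation. The only point worth stating carefully is the consistency remark above, namely that writing $E_{\textrm{loc}}(x,\theta)$ is just shorthand for $E_{\textrm{loc}}(x)$ computed from the density $q_t = q(\cdot,\theta)$, so no additional assumption on the parametric family is needed to make the identification with the standard QVMC gradient.
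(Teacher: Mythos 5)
Your proof is correct and is exactly the argument the paper intends: the corollary follows by substituting the Fisher--Rao flow $\pat q_t(x) = -\bigl(E_{\textrm{loc}}(x,\theta) - \mean_{q_t}[E_{\textrm{loc}}]\bigr)q_t(x)$ from \cref{prop:fisher_rao_gf} into the update formula of \cref{eq:l2_theta_update} and rewriting the integral as an expectation. The signs work out as you state, and your remark that $E_{\textrm{loc}}(x,\theta)$ is merely the local energy of the density $q_t = q(\cdot,\theta)$ is the right clarification.
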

This perspective lays the foundation for deriving our WQMC method in \cref{sec:proposed_method}, by following the Wasserstein or WFR gradient flows rather than Fisher--Rao gradient flows.
\paragraph{Natural Gradient Preconditioning} In order to update the parametric model ${q}(x,\theta)$, instead of following the update in \cref{eq:l2_theta_update}, we can exploit the information geometry of the statistical manifold of ${q}(x,\theta)$, and define the update using the Fisher information matrix $\mathcal{F}_\theta$
\begin{align}
    \Delta \theta^* = ~&\argmin{\substack{\Delta\th \\
    \norm{\Delta\theta}_{\F}=1}}\inner{\Delta\th}{-\int\pat q_{t}(x)\gr_\theta\log q(x,{\th})\dx} \propto \F_{\th}^{-1}\int\pat q_{t}(x)\gr_\theta\log q(x,{\th})\dx\,, \nonumber \\
    \label{eq:fisher_theta_update}
    \mathcal{F}_\theta=~&\mathbb{E}_{q(x,\th)}\left[\frac{\partial}{\partial\theta}\log q\left(x,\th\right)\frac{\partial}{\partial\theta}\log q\left(x,\th\right)^{\top}\right].
\end{align}
This update is analogous to the natural gradient update \citep{amari1998natural}. Note that the choice of Fisher information as the metric on the statistical manifold for preconditioning the gradient and updating $\th$ is independent of the choice of metric on the non-parametric Wasserstein manifold (e.g., Wasserstein or Fisher--Rao) for evolving $q_t(x)$. In practice, we use Kronecker-factored approximation of the natural gradient (K-FAC) \citep{martens2015optimizing}.

\subsection{Wasserstein quantum Monte Carlo}
\label{sec:proposed_method}

In the previous sections, we formulated imaginary-time evolution governed by the Schr\"odinger equation as the energy-minimizing gradient flow under the Fisher--Rao metric. Furthermore, we demonstrated that projecting the  evolved density to the parametric manifold at every iteration corresponds to the QVMC algorithm.

Naturally, we can consider another metric on the (non-parametric) space of distributions, which results in a different gradient flow and corresponds to a different algorithm.
Namely, we propose to consider the gradient descent in 2-Wasserstein space as the energy-minimizing density evolution, as introduced in \cref{sec:gf_wfr}.

\begin{theorem}\label{prop:w2_grad_flow}
    The energy-minimizing 2-Wasserstein gradient flow is defined by the continuity equation
    \begin{align}
        \deriv{q_t(x)}{t} = -\divx{q_t(x) (-\nabla_x E_{\textrm{\emph{loc}}}(x))} \label{eq:w2_qvmc_evolution}
    \end{align}
\end{theorem}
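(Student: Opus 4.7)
The plan is to specialize the general 2-Wasserstein gradient flow formula \cref{eq:w2_evolution} to the case $F[q] = E[q]$. Since \cref{eq:w2_evolution} already reads
\begin{equation*}
    \partial_t q_t(x) = -\nabla_x \cdot \left( q_t(x)\left( -\nabla_x \tfrac{\delta F[q_t]}{\delta q_t}(x)\right) \right),
\end{equation*}
the only non-trivial work is to identify the first variation $\delta E/\delta q$ and observe that it equals the local energy $E_{\textrm{loc}}$.

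First I would compute the $L^2$ first variation of the energy functional as written in \cref{eq:eloc_q}. The cleanest route is to rewrite $q\,\nabla_x^2 \log q = \nabla_x^2 q - q\norm{\nabla_x \log q}^2$, integrate the $\nabla_x^2 q$ term by parts (the boundary term vanishes for any Born density with the usual decay at infinity), and reduce $E[q]$ to an integral of the form $\int \left( q V + \tfrac{1}{8}\norm{\nabla_x q}^2/q\right)\dx$. Applying the Euler--Lagrange recipe to this reduced form, and then using $\nabla_x^2 q / q = \nabla_x^2 \log q + \norm{\nabla_x \log q}^2$ to repackage the answer in terms of $\log q$, I would arrive at
\begin{equation*}
    \frac{\delta E[q]}{\delta q}(x) = V(x) - \tfrac{1}{4}\nabla_x^2 \log q(x) - \tfrac{1}{8}\norm{\nabla_x \log q(x)}^2 = E_{\textrm{loc}}(x).
\end{equation*}
This is exactly the identity already invoked in the proof sketch of \cref{prop:fisher_rao_gf}, so it can be reused verbatim rather than rederived from scratch.

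Substituting $\delta E[q_t]/\delta q_t = E_{\textrm{loc}}$ into the Wasserstein gradient flow equation above immediately yields \cref{eq:w2_qvmc_evolution}, which is the desired statement. As a sanity check, I would verify along the flow that $\tfrac{\dd}{\dd t} E[q_t] = -\mean_{q_t}\left[\norm{\nabla_x E_{\textrm{loc}}}^2\right] \le 0$, with equality precisely at the stationary condition \cref{eq:constant_derivative}; this confirms that the resulting PDE is energy-descending, as expected of a gradient flow (the general correspondence being the one detailed in \cref{app:w2gd}).

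The only genuine obstacle is bookkeeping in the first-variation computation: one must perform integration by parts twice and carefully track which boundary terms vanish. Under standard decay assumptions on $q$ (automatically satisfied by any physically meaningful Born density), those terms drop out and the argument reduces to the algebraic substitution above. No new analytic ideas beyond those already used for the Fisher--Rao case in \cref{prop:fisher_rao_gf} are required; the theorem is essentially a restatement of \cref{eq:w2_evolution} with the variational derivative filled in.
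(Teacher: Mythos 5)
Your proposal is correct and follows essentially the same route as the paper: the paper's proof is exactly the two-line argument of reusing the identity $\delta E[q]/\delta q = E_{\textrm{loc}}$ established for \cref{prop:fisher_rao_gf} and substituting it into \cref{eq:w2_evolution}. Your alternative Euler--Lagrange derivation of the first variation and the energy-dissipation sanity check are fine but not needed beyond what the paper already proves in \cref{app:imaginary_time} and \cref{app:w2gd}.
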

\begin{proof}
    In \cref{prop:fisher_rao_gf}, we show that $\delta E[q]/\delta q = E_{\textrm{loc}}(x)$.
    Plugging this into the 2-Wasserstein gradient flow defined in \cref{eq:w2_evolution}, yields the result in \cref{eq:w2_qvmc_evolution}.
\end{proof}
\paragraph{$c$-Wasserstein Metric}
This result can be further generalized to the $c$-Wasserstein metric with any convex cost function $c: \mathbb{R}^d \to \mathbb{R}$ on the tangent space. 
The $c$-Wasserstein distance between $p_0$ and $p_1$ is defined as follows
\begin{align}
    \WCC{p_0}{p_1} &\coloneqq \inf_{v_t, q_t}  \int_0^1 \; \mathbb{E}_{q_t(x)} 
    \left[ c(v_t(x)) \right] \dt,  \;\;\text{ subj. to} \\
    \deriv{q_t(x)}{t} &= 
    -\divx{q_t(x) v_t(x)}\,,  \quad \text{ and } q_0 = p_0, \;\; q_1  = p_1\,. 
\end{align}

\begin{proposition}
\label{prop:wc_grad_flow}
    The energy-minimizing $c$-Wasserstein gradient flow is defined by the following equation
    \begin{align}
        \deriv{q_t(x)}{t} = -\divx{q_t(x) \nabla c^*(-2\nabla_x E_{\textrm{\emph{loc}}}(x))}\,,
    \end{align}
    where $c^*(\cdot)$ is the convex conjugate function of $c(\cdot)$, and $\nabla c^*(y)$ is its gradient at $y$.
\end{proposition}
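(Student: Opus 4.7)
The plan is to mirror the derivation of the $2$-Wasserstein gradient flow in \cref{prop:w2_grad_flow} but with a generic convex cost $c$. The starting point is the variational characterization: the $c$-Wasserstein gradient flow at $q_t$ is the continuity equation $\partial_t q_t = -\divx{q_t v_t^*}$ whose velocity field $v_t^*$ is the steepest descent direction for $E[q]$ measured against the infinitesimal $c$-cost $\mathbb{E}_{q_t}[c(v)]$. So I would first make this variational principle explicit.

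Next, I would compute the rate of change of $E$ along an arbitrary continuity equation:
\begin{align*}
    \frac{\dd}{\dd t} E[q_t] = \int \frac{\delta E[q_t]}{\delta q_t}(x)\,\partial_t q_t(x)\,\dx = -\int \frac{\delta E[q_t]}{\delta q_t}(x)\,\divx{q_t(x) v_t(x)}\,\dx = \int \nabla_x \frac{\delta E[q_t]}{\delta q_t}(x)\cdot v_t(x)\,q_t(x)\,\dx,
\end{align*}
using integration by parts (assuming decay at infinity). The steepest descent field at the normalization $\mathbb{E}_{q_t}[c(v)]=\text{const.}$ is then obtained by a Lagrangian argument: minimize $\int \big[\nabla_x \tfrac{\delta E}{\delta q_t}\cdot v + \lambda c(v)\big]\,q_t\,\dx$ pointwise in $v$. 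Since $c$ is convex, the minimizer is $v^*(x)=\nabla c^*\!\big(-\lambda^{-1}\nabla_x \tfrac{\delta E[q_t]}{\delta q_t}(x)\big)$ by standard convex conjugacy, where $c^*$ is the Legendre transform of $c$.

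I would then fix the Lagrange multiplier by matching the Wasserstein case: setting $c(v)=\norm{v}^2$ gives $c^*(y)=\tfrac14\norm{y}^2$ and $\nabla c^*(y)=\tfrac12 y$, and requiring $v^*=-\nabla_x \tfrac{\delta E}{\delta q_t}$ (as in \cref{prop:w2_grad_flow}) forces $\lambda = 1/2$. This yields the general formula $v^*(x) = \nabla c^*\!\big(-2\nabla_x \tfrac{\delta E[q_t]}{\delta q_t}(x)\big)$. Substituting the first variation $\delta E[q]/\delta q = E_{\text{loc}}$, established in the proof of \cref{prop:fisher_rao_gf}, into the continuity equation $\partial_t q_t = -\divx{q_t v^*}$ gives the stated PDE.

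The main obstacle will be pinning down the constant factor and the conventions: different sources define the dynamic cost with or without a prefactor $1/2$, and the factor of $-2$ in the argument of $\nabla c^*$ is exactly what is needed to reduce to the Wasserstein flow already proved. The other delicate point is justifying the integration by parts and the pointwise minimization rigorously (decay of $q_t$, smoothness and strict convexity of $c$ so that $\nabla c^*$ is well-defined and single-valued), but under the implicit regularity assumptions used throughout the paper these can be taken for granted and the remainder is a direct calculation.
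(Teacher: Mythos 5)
Your proposal is correct and follows essentially the same route as the paper's proof in \cref{app:cw_grad_flow}: compute the first variation of $E$ along a continuity equation via integration by parts, then minimize $\int q\left[\inner{v}{\nabla_x \delta E/\delta q} + \mathrm{const}\cdot c(v)\right]\dx$ pointwise in $v$ using Legendre duality to obtain $v^* = \nabla c^*\left(-2\nabla_x E_{\textrm{loc}}\right)$. The only difference is how the factor $-2$ is fixed --- the paper obtains it from the $\frac{1}{2\Delta t}\WCC{q}{q'}$ prefactor in the minimizing movement scheme, while you calibrate a Lagrange multiplier against the already-established $W_2$ case of \cref{prop:w2_grad_flow} --- which lands in the same place.
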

\begin{proof}
    See \cref{app:cw_grad_flow}.
\end{proof}
\cref{prop:w2_grad_flow} can be viewed as a special case of \cref{prop:wc_grad_flow} where $c(\cdot)=\norm{\cdot}^2$. Introducing a different $c$ than $L^2$ norm translates to a non-linear transformation of the gradient $-\nabla_x E_{\textrm{loc}}(x)$.
In \cref{app:cw_grad_flow}, we demonstrate how to choose $c$ such that it corresponds to the coordinate-wise application of $\tanh$ to the gradient, which we use in practice.

Finally, using \cref{prop:wc_grad_flow} in \cref{eq:l2_theta_update}, we get the expression for the parameter update, i.e.
\begin{align}
    \Delta \theta^* \propto \int q_t(x) \nabla_\theta\bigg\langle\nabla c^*(-2 \nabla_x E_{\textrm{{loc}}}(x)),\nabla_x\log q(x,\theta)\bigg\rangle \dx. \label{eq:w2_qvmc_grads}
\end{align}
Similar to the discussion of the previous section for QVMC, we can precondition the gradient with the Fisher Information Matrix, exploiting the geometry of the parametric manifold.

In \cref{alg:w2qmc}, we provide a pseudocode for the proposed algorithms.
The procedure follows closely QVMC but introduces a different objective.
When using gradients both from \cref{eq:fr_param_gradient,eq:w2_qvmc_grads}, we follow the gradient flow under the Wasserstein Fisher-Rao metric with the coefficient $\lambda$.
For $\lambda \to \infty$, the cost of mass teleportation becomes infinite and we use only the gradient from \cref{eq:w2_qvmc_grads}, which corresponds to the gradient flow under the $c$-Wasserstein metric (we refer to this algorithm as WQMC).
For $\lambda \to 0$, the cost of mass teleportation becomes negligible compared to the transportation cost and the resulting algorithm becomes QVMC, which uses the gradient from \cref{eq:fr_param_gradient}.
In practice, we consider the extreme cases ($\lambda \to 0,\infty$) and the mixed case $\lambda = 1$.

\begin{algorithm}[H]
  \caption{W\hl{(FR)}QMC}
  \begin{algorithmic}
    \REQUIRE{samples $\{x^{(i)}\}_{i=1}^N \sim q_{t=0}(x)$}
    \REQUIRE{potential function $V(x)$}
    \WHILE{not converged}
        \STATE $E_{\textrm{loc}}(x^{(i)}) = V(x^{(i)}) -\frac{1}{4} \nabla^2_x \log q(x^{(i)}, \theta) - \frac{1}{8} \norm{\nabla_x \log q(x^{(i)}, \theta)}^2$ \hspace*{\fill} (see Eq.~\ref{eq:eloc_q})
        \STATE $\nabla_x E_{\textrm{loc}}(x^{(i)}) = \textrm{stop\_gradient}(\nabla_x E_{\textrm{loc}}(x^{(i)}))$
        \STATE $\Delta\theta^* = \frac{1}{N}\sum_i^N  \nabla_\theta\bigg\langle\nabla c^*\left(-2\nabla_x E_{\textrm{loc}}(x^{(i)})\right),\nabla_x\log q(x^{(i)},\theta)\bigg\rangle$\hspace*{\fill} (see Eq.~\ref{eq:w2_qvmc_grads})
        \STATE \hl{$E_{\textrm{loc}}(x^{(i)}) = \textrm{stop\_gradient}(E_{\textrm{loc}}(x^{(i)}))$}
        \STATE \hl{$\Delta\theta^* \pluseq -\frac{1}{\lambda}\frac{1}{N}\sum_i^N  \left[E_{\textrm{loc}}(x^{(i)}) - \frac{1}{N}\sum_j^N E_{\textrm{loc}}(x^{(j)}) \right]\nabla_\theta\log q(x^{(i)},\theta)$} \hspace*{\fill} (see Eq.~\ref{eq:fr_param_gradient})
        \STATE $\theta' = \text{optimizer}(\theta,\mathcal{F}_\theta^{-1}\Delta\theta^*)$ \hspace*{\fill} (see Eq.~\ref{eq:fisher_theta_update})
        \STATE update $x^{(i)}$ by sampling from $q(x,\theta')$ via MCMC
    \ENDWHILE
    \RETURN{model $q(x,\theta^*)$, samples $\{x^{(i)}\}_{i=1}^N \sim q(x,\theta^*)$}
  \end{algorithmic}
  \label{alg:w2qmc}
\end{algorithm}

\section{Experiments \protect\footnote{Code reproducing experiments is available at \href{https://github.com/necludov/wqmc}{github.com/necludov/wqmc}. The method is also implemented in the FermiNet library: \href{https://github.com/google-deepmind/ferminet}{github.com/google-deepmind/ferminet}.}}
\begin{figure}[t]
    \centering
    \includegraphics[width=\textwidth]{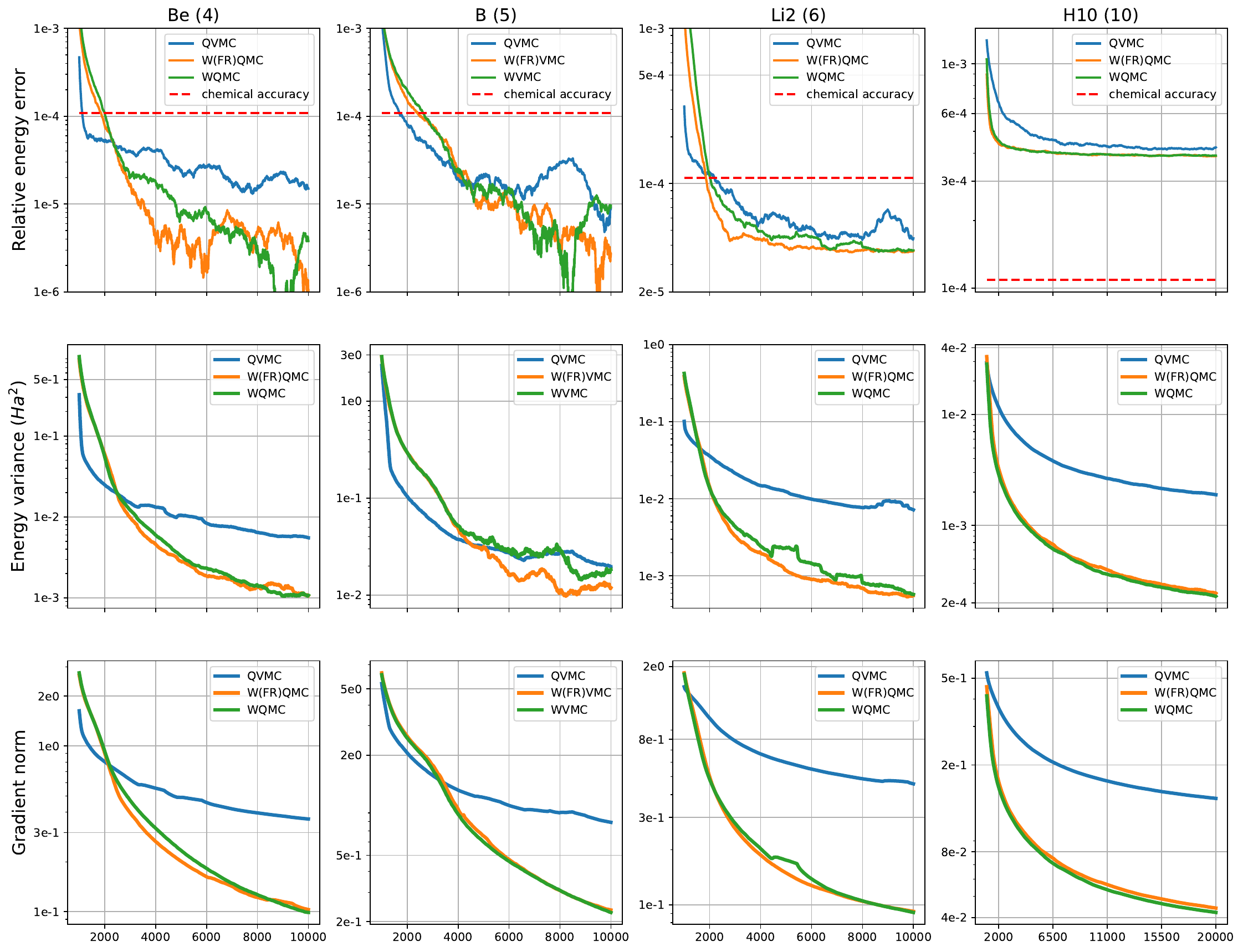}
    \vspace{-5pt}
    \caption{
    Optimization results for different chemical systems (every column corresponds to a given molecule). 
    The number of electrons is given in the brackets next to systems' names.
    Throughout the optimization, we monitor three values: the mean value of the local energy (lower is better), the variance of the local energy, and the median value of the gradient norm of the local energy.
    In the first row of plots, we average (removing $5\%$ of outliers from both sides) the energy over $1000$ iterations and report the relative error to the actual ground-state energy: $(E-E_0)/E_0$.
    In the second row, we report standard deviation averaged over $1000$ iterations (removing $5\%$ of outliers from both sides).
    In the third row, we report the median gradient norm averaged over $1000$ iterations (removing $5\%$ of outliers from both sides).
    See the descriptions of methods in the text.
    }
    \label{fig:results}
\end{figure}

For the empirical study of the proposed method, we consider Born--Oppenheimer approximation of chemical systems. Within this approximation, the wave function of the electrons in a molecule can be studied separately from the wave function of the atomic nuclei.
Namely, we consider the following Hamiltonian
\begin{align}
    H = -\frac12\nabla_x^2 + \sum_{i < j}\frac{1}{\norm{x_i - x_j}} - \sum_{i,I}\frac{Z_I}{\norm{x_i - X_I}} + \sum_{I < J}\frac{Z_I Z_J}{\norm{X_I - X_J}}\,,
\end{align}
where $x_i$ are the coordinates of electrons, $X_I, Z_I$ are the coordinates and charges of nuclei. The first kinetic term contains derivatives with respect to the electron positions $x$.
Indeed, the positions of the nuclei are given and fixed, and we target the ground state of the electronic wave function $\psi(x)$, which is an explicit function of the electron positions only. Solving the electronic Schrödinger equation is a notoriously difficult task, and is a topic of intense research in quantum chemistry and material sciences.

Since electrons are indistinguishable fermions, we restrict the Hilbert space to states $\psi$ that are antisymmetric under electron permutations (see \cref{sec:background}).
This can be achieved by incorporating Slater determinants into the deep neural network, which parametrizes the wave function $\psi(x,\theta)$, as proposed in various recent works \citep{luo2019backflow, hermann2020deep, pfau2020ab, hermann2022ab}.
The density is then given by the Born rule $q(x,\theta) = \abs{\psi(x,\theta)}^2$.
For all our experiments, we follow \citep{von2022self} and use the ``psiformer'' architecture together with preconditioning the gradients via K-FAC \citep{martens2015optimizing}.

In our method, we apply several tricks which stabilize the optimization and improve convergence speed.
Firstly, we have observed that applying a $\tanh$ non-linearity coordinate-wise to the gradient $\nabla_x E_{\textrm{loc}}(x)$ significantly improves convergence speed.
This corresponds to a different cost function in the Wasserstein metric, as we discuss in \cref{prop:wc_grad_flow} and \cref{app:cw_grad_flow}.
Also, we remove samples from the batch whose norm $\norm{\nabla_x \log q(x,\theta)}$ significantly exceeds the median value.
Namely, we estimate the deviation from the norm as $\mean_{q(x,\theta)}\left[\abs{\norm{\nabla_x \log q(x,\theta)} - \textrm{median}(\norm{\nabla_x \log q(x,\theta)})}\right]$ and remove samples whose norm exceeds five deviations from the median.
When including the gradient from \cref{eq:fr_param_gradient}, we clip the local energy values as proposed in \citep{von2022self}, i.e. by estimating the median value and clipping to five deviations from the median, where the deviation is estimated in the same way as for the norm of the gradient.

We consider different chemical systems and compare against QVMC as a baseline.
We run our novel method with the same architecture and hyperparameters as the baseline QVMC-based approach in \citep{von2022self}.
For the chemical systems, we consider Be, and B atoms, the Li${}_2$ molecule and the hydrogen chain H${}_{10}$ from \citep{hermann2020deep}.
The exact values of energies for Be, B, Li${}_2$ are taken from \citep{pfau2020ab}, the exact value of the energy for H${}_{10}$ is from \citep{hermann2020deep}.
All the hyperparameters and architectural details are provided in the supplementary material.

In \cref{fig:results}, we demonstrate the convergence plots for the baseline (QVMC) and the proposed methods (WQMC and W(FR)QM, see \cref{alg:w2qmc}).
For all the considered systems, both WQMC and W(FR)QMC yield more precise estimations of the ground state energy (the first row of \cref{fig:results}).
To assess convergence, we also monitor the variance of the local energy and the gradient norm of the local energy.
As we discuss in \cref{eq:constant_derivative}, both metrics must vanish at the ground state. More fundamentally, the variance of the local energy can be shown to vanish for eigenstates of the Hamiltonian~\citep{wu2023variational}, referred to as the zero-variance property.
First, we point out that obtaining the ground states of the considered molecules with QVMC is challenging, and even with powerful deep-learning architecture, discrepancies remain with the ground state. Since we use existing state-of-the-art architectures as a backbone, our results are also limited by the limitations of the latter~\citep{gao2023generalizing}. Developing novel architectures is out of the scope of this work.

\begin{table}[t]
    \centering
    \resizebox{\textwidth}{!}{
    \begin{tabular}{c|ccc|ccc}
        Method name & QVMC & WQMC & W(FR)QMC & QVMC & WQMC & W(FR)QMC \\
         \midrule
         Molecule & \multicolumn{3}{c|}{Be (4)} & \multicolumn{3}{c}{B (5)} \\
         \midrule 
         Relative energy error & 1.50e-5 & 3.79e-6 & \textbf{1.04e-6} & 9.01e-6 & 9.58e-6 & \textbf{2.69e-6} \\
         Energy variance (Ha${}^2$) & 5.53e-3 & 1.08e-3 & \textbf{1.07e-3} & 1.96e-2 & 1.84e-2 & \textbf{1.19e-2}
         \\
         \midrule
         Molecule & \multicolumn{3}{c|}{Li${}_2$ (6)} & \multicolumn{3}{c}{H${}_{10}$ (10)} \\
         \midrule 
         Relative energy error & 4.43e-5 & 3.71e-5 & \textbf{3.66e-5} & 4.24e-4 & 3.90e-4 & \textbf{3.88e-4} \\
         Energy variance (Ha${}^2$) & 7.21e-3 & 5.76e-4 & \textbf{5.59e-4} & 1.89e-3 & \textbf{2.30e-4} &  2.45e-4
    \end{tabular}}
    \vspace{5pt}
    \caption{Energy and variances estimates for all systems after $10$k iterations ($20$k for H${}_{10}$).}
    \label{tab:my_label}
    \vspace{-1cm}
\end{table}

However, in \cref{fig:results}, we clearly observe that both WQMC and W(FR)QMC yield significantly faster convergence of the aforementioned metrics compared to QVMC. In particular, for the larger molecules Li$_2$ and H${}_{10}$, we observe that we consistently obtain lower energies within $10$k steps ($20$k for H${}_{10}$) with a more stable convergence. For the smaller molecules we observe that QVMC obtains lower energies in the first few iterations, but its convergence slows down significantly, after which our approach steadily yields improved energies below QVMC.
Overall, our experiments demonstrates that taking into account the Wasserstein metric allows for faster convergence to accurate approximations of the ground state. 
See the final metrics in \cref{tab:my_label}.

\section{Limitations}
Since our method requires an extra gradient evaluation compared to QVMC, we include the runtime of all the algorithms in \cref{fig:runtime_plots}, \cref{app:timing}. 
Namely, for a proper comparison, instead of reporting the metrics per iteration, we report them per wall time in seconds. 
Note that all the claims of the experimental section still hold in terms of wall time. 
All the models were benchmarked on four A40 GPUs.
Third-order derivatives can be efficiently computed using modern deep learning frameworks such as JAX \citep{jax2018github}, which we used to implement our method.

Potentially, one can alleviate the extra cost of the iteration by coming up with more efficient Monte Carlo schemes or other updates of the samples. 
Indeed, in our experiments, we observed that the proposed method requires fewer MCMC steps (not included in the paper). 
Moreover, one can use the evaluated gradient of the local energy as the proposal vector field for updating the samples. 
This would allow to decrease the number of MCMC steps at the low cost of additional hyperparameter tuning.

\section{Discussion and conclusion}
\paragraph{Conclusion}
In the current paper, we propose a novel approach to solving the quantum many-body Schr\"odinger equation, by incorporating the Wasserstein metric on the space of Born distributions.
Compared to the Fisher--Rao metric, which allows for probability mass ``teleportation'', the Wasserstein metric constrains the evolution of the density to local changes under the locally-optimal transportation plan, i.e.,\ following fluid dynamics.
This property is favorable when the evolution of the parametric model is accompanied by the evolution of samples (performed by an MCMC algorithm).
Indeed, by forbidding or regularizing non-local mass ``teleportation'' in the density change, one prevents the appearance of distant modes in the density model, which would require longer MCMC runs for proper mixing of the samples.

In practice, we demonstrate that following the gradient flow under the Wasserstein (or Wasserstein Fisher--Rao) metric results in better convergence to the ground state wave function.
This is expected to be due to our proposed loss, which takes into account the gradient of the local energy and achieves its minimum when the norm of the gradient vanishes, therefore explicitly minimizing the norm of the local energy gradient.

We believe that our new theoretical framework for solving the time-independent Schr\"{o}dinger equation for time-reversal symmetric Hamiltonians based on optimal transport will open new avenues to develop improved numerical methods for quantum chemistry and physics.

\paragraph{Connection to Energy-Based and Score-Based Generative Models}

The developed ideas of this paper, i.e., projecting gradient flows under different metrics onto a parametric family, can be extended to generative modeling by swapping the energy functional with the KL-divergence.
More precisely, as we show in \cref{app:gen_models_connection}, using the KL-divergence as our objective functional, the Fisher--Rao gradient flow yields energy-based training scheme, while the 2-Wasserstein gradient flow corresponds to the score-matching, which is used for training diffusion generative models.

\section{Acknowledgement}
The authors thank Rob Brekelmans and anonymous reviewers for helpful discussions and feedback. J.N.\ was supported by Microsoft Research. J.C. acknowledges support from the Natural Sciences and Engineering Research Council (NSERC), the Shared Hierarchical Academic Research Computing Network (SHARCNET), Compute Canada, and the Canadian Institute for Advanced Research (CIFAR) AI Chairs program. A.M. acknowledges support from the Canada CIFAR AI Chairs program. Resources used in preparing this research were provided, in part, by the Province of Ontario, the Government of Canada through CIFAR, and companies sponsoring the Vector Institute \url{www.vectorinstitute.ai/#partners}.

\bibliography{neurips}
\bibliographystyle{icml2023}
\newpage
\appendix
\section{Gradient flows under Wasserstein Fisher--Rao metric}
\label{app:w2gd}

We, first, remind the concept of the functional derivative.
The change of the functional $F[q]: \mathcal{P}_2 \to \mathbb{R}$ along the direction $h$ can be expressed as
\begin{align}
    F[q + h] = F[q] + \dd F[h] + o(\norm{h}), \;\; \dd F[h]=\int \dx\; h(x) \underbrace{\frac{\delta F[q]}{\delta q}(x)}_{\text{derivative}}.
\end{align}
Consider a change of the density in time, the change of the functional can be defined through the differential as
\begin{align}
    F\bigg[q_t + \Delta t \deriv{q_t}{t}\bigg] = F[q_t] + \Delta t\cdot \dd F\bigg[\deriv{q_t}{t}\bigg] + o(\norm{h}), \;\; \dd F\bigg[\deriv{q_t}{t}\bigg]=\int \dx\; \deriv{q_t(x)}{t} \frac{\delta F[q_t]}{\delta q_t}(x).
\end{align}
In particular, we have
\begin{align}
    \frac{\dd}{\dd t} F[q_t] = \dd F\bigg[\deriv{q_t}{t}\bigg]=\int \dx\; \deriv{q_t(x)}{t} \frac{\delta F[q_t]}{\delta q_t}(x).
\end{align}

\subsection{Minimizing movement scheme for 2-Wasserstein distance and Kullback-Leibler divergence}

\paragraph{Gradient flow under W${}_2$}
Consider the following minimizing movement scheme (MMS)
\begin{align}
    \inf_{q'} F[q'] - F[q] + \frac{1}{2\Delta t}\W{q}{q'},
\end{align}
where the change of the density is restricted to the continuity equation, i.e.,
\begin{align}
    \deriv{q_t}{t} = -\divx{q_t(x) v_t(x)}, \;\;\text{ and }\;\; q'(x) = q(x) -\Delta t\divx{q(x) v(x)} + o(\Delta t).
\end{align}
Using the static formulation of $W_2$ distance, we have
\begin{align}
    \W{q}{q'}^2 = \int \dx\; q(x)\norm{x - T^*(x)}^2 = \Delta t^2\int \dx\; q(x)\norm{v^*(x)}^2,
\end{align}
where $T^*(x)$ is the optimal transportation plan, and $v^*(x)$ is the corresponding optimal gradient field.
Thus, we can rewrite the MMS problem as
\begin{align}
    &\inf_{v} F[q] - \Delta t \int \dx\; \divx{q(x) v(x)} \frac{\delta F[q_t]}{\delta q_t}(x) - F[q] + \frac{\Delta t}{2}\int \dx\;q(x)\norm{v(x)}^2, \\
    &\inf_{v} \int \dx\; q(x)\inner{v(x)}{\nabla_x \frac{\delta F[q_t]}{\delta q_t}(x)} + \frac{1}{2}\int \dx\;q(x)\norm{v(x)}^2,\\
    &\inf_{v} \int \dx\; q(x)\norm{v(x) + \nabla_x \frac{\delta F[q_t]}{\delta q_t}(x)}^2.
\end{align}
From the last optimization problem, we have
\begin{align}
    v(x) = - \nabla_x \frac{\delta F[q_t]}{\delta q_t}(x).
\end{align}

\paragraph{Gradient flow under KL}
Consider the following minimizing movement scheme (MMS)
\begin{align}
    \inf_{q'} F[q'] - F[q] + \frac{1}{2\Delta t}\KL{q'}{q},
\end{align}
where the change of the density is restricted to the following weighting scheme
\begin{align}
    \deriv{q_t}{t} =~& -g_t(x)q_t(x),\;\; \text{ hence },\\ 
    q'(x) =~& q_t(x) - \Delta t\; q_t(x)g_t(x) + o(\Delta t), \;\;\text{ and }\;\; \\
    \log q'(x) =~& \log q_t(x) - \Delta t\; g_t(x) + \frac{\Delta t^2}{2}\deriv{^2\log q_t(x)}{t^2} + o(\Delta t^2).
\end{align}
The KL-divergence is then
\begin{align}
    \KL{q'}{q} =~& \int \dx\; q'(x)\left( - \Delta t\; g_t(x) + \frac{\Delta t^2}{2}\deriv{^2\log q_t(x)}{t^2}\right) + o(\Delta t^2)\\
    =~&- \Delta t\int \dx\; q_t(x)g_t(x) + \frac{\Delta t^2}{2}\int \dx\; q_t(x)\deriv{^2\log q_t(x)}{t^2} \\
    ~& + \Delta t^2 \int dx\; q_t(x)g_t(x) + o(\Delta t^2)\\
    =~& \frac{\Delta t^2}{2} \int \dx\; q_t(x)g_t^2(x) + o(\Delta t^2).
\end{align}
In the last equation we are using the normalization condition, i.e., 
\begin{align}
    \int \dx\; \deriv{q_t(x)}{t} =\int \dx\; g_t(x)q_t(x) = 0.
\end{align}
Thus, we can rewrite the MMS problem as
\begin{align}
    &\inf_{g} F[q] + \Delta t \int dx\; g(x)q(x) \frac{\delta F[q_t]}{\delta q_t}(x) - F[q] + \frac{\Delta t}{4}\int dx\;q(x)g(x)^2, \\
    &\inf_{g} \int dx\; q(x)g(x) \left(\frac{\delta F[q_t]}{\delta q_t}(x) - \mean_{q(y)}\left[\frac{\delta F[q_t]}{\delta q_t}(y)\right]\right) + \frac{1}{4}\int dx\;q(x)g(x)^2,\\
    &\inf_{g} \int dx\; q(x)\left[g(x) + 2\left(\frac{\delta F[q_t]}{\delta q_t}(x) - \mean_{q(y)}\left[\frac{\delta F[q_t]}{\delta q_t}(y)\right]\right)\right]^2.
\end{align}
From the last optimization problem, we have
\begin{align}
    g(x) = - 2\left(\frac{\delta F[q_t]}{\delta q_t}(x) - \mean_{q(y)}\left[\frac{\delta F[q_t]}{\delta q_t}(y)\right]\right).
\end{align}
Note, however, that $\KL{\cdot}{\cdot}$ is not the same as the Fisher--Rao metric.
The derivations here demonstrate that the Fisher--Rao gradient flow can be derived as the MMS scheme with the KL-divergence.

\subsection{Minimizing movement scheme for the Wasserstein Fisher--Rao metric}

Consider the Wasserstein Fisher--Rao distance
\begin{align}
   \WFR{p_0}{p_1}^2 &\coloneqq \inf_{v_t,g_t,q_t}  \int_0^1\; \mathbb{E}_{q_t(x)} 
    \left[  \norm{v_t(x)}^2 + \lambda  g_t(x)^2 \right] \dt,\;\;\text{ subj. to} \\
    \deriv{q_t(x)}{t} &= 
    - \divx{q_t(x) v_t(x)} 
    + g_t(x)q_t(x)\,, \;\; q_0(x) = p_0(x), \;\; q_1(x)  = p_1(x)\,.
\end{align}
The minimizing movement scheme (MMS) for this distance is
\begin{align}
    \inf_{q'} F[q'] - F[q] + \frac{1}{2\Delta t}\WFR{q}{q'}^2,
\end{align}
where the change of the density is given by the continuity equation with the growth term
\begin{align}
    \deriv{q_t}{t} = -\divx{q_t(x) v_t(x)} + g_t(x)q_t(x).
    \label{appeq:continuity_with_growth}
\end{align}
For close enough $q$ and $q'$, $\WFR{q}{q'}^2$ can be estimated via the metric derivative $\abs{\mu_t'}^2$ that is defined as
\begin{align}
    \abs{\mu_t'}^2 = \left(\lim_{\Delta t\to 0}\frac{\WFR{q_t}{q_{t+\Delta t}}}{\Delta t}\right)^2,
\end{align}
hence,
\begin{align}
    \WFR{q}{q'}^2 = \Delta t^2\abs{\mu_t'}^2 = \Delta t^2 \int dx\; q(x)\bigg[\norm{v^*(x)}^2 + \lambda g^*(x)^2\bigg].
\end{align}
Thus, the MMS problem can be written as
\begin{align}
    \inf_{g,v} ~& F[q] - \Delta t \int dx\; \divx{q(x) v(x)} \frac{\delta F[q_t]}{\delta q_t}(x) + \int dx\; q(x)g(x) \frac{\delta F[q_t]}{\delta q_t}(x) - F[q] \nonumber \\ 
    & + \frac{\Delta t}{2}\int dx\; q(x)\bigg[\norm{v(x)}^2 + \lambda g(x)^2\bigg],\\
    \inf_{g,v} ~& \int dx\; q(x)\inner{v(x)}{\nabla_x \frac{\delta F[q_t]}{\delta q_t}(x)} + \int dx\; q(x)g(x) \left(\frac{\delta F[q_t]}{\delta q_t}(x) - \mean_{q(y)}\left[\frac{\delta F[q_t]}{\delta q_t}(y)\right]\right) \nonumber  \\ 
    & + \frac{1}{2}\int dx\; q(x)\bigg[\norm{v(x)}^2 + \lambda g(x)^2\bigg], \\
    \inf_{g,v} ~& \int dx\; q(x)\norm{v(x) + \nabla_x \frac{\delta F[q_t]}{\delta q_t}(x)}^2 \nonumber \\
    & + \lambda\int dx\; q(x)\left(g(x) + \frac{1}{\lambda} \left(\frac{\delta F[q_t]}{\delta q_t}(x) - \mean_{q(y)}\left[\frac{\delta F[q_t]}{\delta q_t}(y)\right]\right) \right)^2.
\end{align}
From the last optimization problem, we have
\begin{align}
    v(x) = - \nabla_x \frac{\delta F[q_t]}{\delta q_t}(x), \;\;\; g(x) = -  \frac{1}{\lambda} \left(\frac{\delta F[q_t]}{\delta q_t}(x) - \mean_{q(y)}\left[\frac{\delta F[q_t]}{\delta q_t}(y)\right]\right).
\end{align}

Note that different values of $\lambda$ result in different gradient flows.
For instance, considering the limit $\lambda \to \infty$ we have $g(x) \to 0$ and \cref{appeq:continuity_with_growth} just becomes $W_2$ gradient flow, which is natural since we have an infinite penalty for the mass teleportation in our metric.
Setting $\lambda \to 0$ requires some additional consideration, since then the growth term explodes, and all the mass will be teleported without any cost.
Indeed, for $\lambda \to 0$, our metric does not penalize for the mass teleportation at all, but our change of density (\cref{appeq:continuity_with_growth}) is still able to teleport mass, hence, it will be doing so ``for free''.

\subsection{PDEs demonstrating the convergence}

Consider the change of the density $q_t$ under the continuity equation with the vector field $v_t(x)$ and the growth term $g_t(x)$
\begin{align}
    \deriv{q_t}{t}(x) = -\divx{q_t(x)v_t(x)} + g_t(x)q_t(x).
\end{align}
Thus, the change of the functional $F[q]$ is
\begin{align}
    \frac{\dd}{\dd t} F[q_t] = ~& -\int \dx\; \divx{q_t(x)v_t(x)} \frac{\delta F[q_t]}{\delta q_t}(x) + \int \dx\; q_t(x)g_t(x) \frac{\delta F[q_t]}{\delta q_t}(x) \\
    = ~& \int \dx\; q_t(x)\inner{v_t(x)}{\nabla_x \frac{\delta F[q_t]}{\delta q_t}(x)} + \int \dx\; q_t(x)g_t(x) \frac{\delta F[q_t]}{\delta q_t}(x).
\end{align}
From this equation, we can clearly see that $v_t(x)$ and $g_t(x)$ derived in the previous section minimize $F[q]$.
Indeed, taking
\begin{align}
    v_t(x) = -\nabla_x \frac{\delta F[q_t]}{\delta q_t}(x), \;\; g_t(x) = -\frac{1}{\lambda}\bigg(\frac{\delta F[q_t]}{\delta q_t}(x) - \mean_{q_t(y)}\left[\frac{\delta F[q_t]}{\delta q_t}(y)\right]\bigg),
\end{align}
we get
\begin{align}
    \frac{\dd}{\dd t} F[q_t] = ~& - \int \dx\; q_t(x)\norm{\nabla_x \frac{\delta F[q_t]}{\delta q_t}(x)}^2 - \frac{1}{\lambda}\int \dx\; q_t(x)\bigg(\frac{\delta F[q_t]}{\delta q_t}(x) - \mean_{q_t(y)}\frac{\delta F[q_t]}{\delta q_t}(y)\bigg)^2 \\
    ~&- \frac{1}{\lambda}\underbrace{\int \dx\; q_t(x)\bigg(\frac{\delta F[q_t]}{\delta q_t}(x) - \mean_{q_t(y)}\left[\frac{\delta F[q_t]}{\delta q_t}(y)\right]\bigg)}_{=0}\mean_{q_t(z)}\left[\frac{\delta F[q_t]}{\delta q_t}(z)\right] \leq 0.
\end{align}
Note that the considered growth term preserves the normalization of the density, i.e.,
\begin{align}
    \int \dx\; \deriv{q_t}{t}(x) =  \int \dx\; g_t(x)q_t(x) = -\int \dx\; q_t(x) \bigg(\frac{\delta F[q_t]}{\delta q_t}(x) - \mean_{q_t(y)}\left[\frac{\delta F[q_t]}{\delta q_t}(y)\right]\bigg) = 0.
\end{align}
Thus, our functional $F[q]$ decreases when the density $q_t$ evolves according to the PDE
\begin{align}
    \deriv{q_t}{t}(x) = \underbrace{-\divx{q_t(x)\bigg(-\nabla_x \frac{\delta F[q_t]}{\delta q_t}(x)\bigg)}}_{\text{the continuity equation}} - \frac{1}{\lambda}\underbrace{\bigg(\frac{\delta F[q_t]}{\delta q_t}(x) - \mean_{q_t(y)} \left[\frac{\delta F[q_t]}{\delta q_t}(y)\right]\bigg)}_{\text{growth term}} q_t(x),
\end{align}
and reaches its stationary point when $\norm{\nabla_x \frac{\delta F[q_t]}{\delta q_t}(x)}^2 = 0$, i.e., $\frac{\delta F[q_t]}{\delta q_t}(x)\equiv \text{constant}$.

Note, that in the same way we can consider the continuity equation and the growth term separately, which defines the gradient flows under 2-Wasserstein and Fisher--Rao metrics respectively.
The corresponding PDEs are
\begin{align}
    \deriv{q_t}{t}(x) =~& -\divx{q_t(x)\bigg(-\nabla_x \frac{\delta F[q_t]}{\delta q_t}(x)\bigg)}, \\
    \deriv{q_t}{t}(x) =~& - \bigg(\frac{\delta F[q_t]}{\delta q_t}(x) - \mean_{q_t(y)}\left[ \frac{\delta F[q_t]}{\delta q_t}(y)\right]\bigg) q_t(x).
\end{align}

\section{Imaginary-time Schrödinger equation as the gradient flow under Fisher--Rao Metric}
\label{app:imaginary_time}

\begin{theorem*}
    \cref{eq:imaginary_time} defines the gradient flow of the energy functional $\mathbb{E}[q]$ under the Fisher--Rao metric.
\end{theorem*}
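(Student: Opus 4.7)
The plan is to check the two ingredients that the Fisher--Rao evolution formula \cref{eq:fisher_rao_evolution} requires, and then reconcile the resulting PDE with the imaginary-time Schrödinger equation up to a constant time rescaling. Concretely, I would (i) verify the first-variation identity $\delta E[q]/\delta q = E_{\text{loc}}$, (ii) substitute it into \cref{eq:fisher_rao_evolution} to obtain a closed-form PDE for $q_t$, and (iii) show that this PDE coincides with the density-level equation induced by \cref{eq:imaginary_time} via $q_t = \psi_t^2$, up to a harmless factor of $2$.

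For step (i), I would start from $E[q] = \int q(x) V(x)\dx - \tfrac{1}{4}\int q(x)\, \nabla_x^2 \log q(x)\dx - \tfrac{1}{8}\int q(x) \norm{\nabla_x \log q(x)}^2 \dx$. A clean preliminary move is to integrate by parts in the middle term to rewrite the kinetic piece as a Fisher-information functional: using $\nabla q = q \nabla \log q$, one gets $\int q\, \nabla_x^2 \log q\dx = -\int q\, \norm{\nabla_x \log q}^2 \dx$, so $E[q]$ reduces to $\int qV\dx + \tfrac{1}{8}\int \norm{\nabla q}^2/q\dx$. Perturbing $q \mapsto q + \varepsilon h$, expanding to first order in $\varepsilon$, and integrating by parts once to move the $\nabla h$ factor back onto the coefficients yields exactly $V - \tfrac{1}{4}\nabla_x^2 \log q - \tfrac{1}{8}\norm{\nabla_x \log q}^2 = E_{\text{loc}}$.

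For step (ii), plugging $\delta E[q]/\delta q = E_{\text{loc}}$ into \cref{eq:fisher_rao_evolution} gives $\partial_t q_t(x) = -\bigl(E_{\text{loc}}(x) - \mean_{q_t}[E_{\text{loc}}]\bigr)q_t(x) = -\bigl(E_{\text{loc}}(x) - E[q_t]\bigr)q_t(x)$. For step (iii), I would use the chain rule $\partial_t q_t = 2\psi_t\, \partial_t \psi_t$. Substituting the imaginary-time equation $\partial_t \psi_t = -(H - E[\psi_t])\psi_t$, we obtain $\partial_t q_t = -2\psi_t (H-E[\psi_t])\psi_t = -2\bigl((H\psi_t)/\psi_t - E[\psi_t]\bigr)\psi_t^2 = -2\bigl(E_{\text{loc}} - E[q_t]\bigr) q_t$, which is exactly the Fisher--Rao flow above with $t$ rescaled by $2$.

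The main obstacle will be executing the first-variation calculation cleanly: the functional mixes $q$ with $\log q$ and both first and second derivatives, so one must track signs carefully through integration by parts and assume enough decay at infinity for boundary terms to vanish. A secondary subtlety, worth noting but easy to dispatch, is that $E[q] = E[\psi^2]$ is well-defined on the space of Born densities arising from (anti)symmetric wavefunctions, so the density-level variational calculation is legitimate for the class of $\psi$ targeted by the paper; in particular, $E_{\text{loc}} = H\psi/\psi$ expressed through $q = \psi^2$ agrees with the expression used in \cref{eq:eloc_q}, which closes the loop between the wavefunction and distributional viewpoints.
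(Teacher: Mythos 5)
Your proposal is correct and follows essentially the same route as the paper's proof: establish $\delta E[q]/\delta q = E_{\textrm{loc}}$, substitute into the Fisher--Rao flow \cref{eq:fisher_rao_evolution}, and match the resulting PDE to the density evolution induced by \cref{eq:imaginary_time} via $\partial_t q_t = 2\psi_t\,\partial_t\psi_t$, up to a factor of $2$. The only cosmetic difference is in step (i), where you first integrate by parts to reduce the kinetic term to a Fisher-information functional before varying, while the paper varies the original expression directly and shows the extra terms cancel; both yield the same derivative, and your closing remark about reconciling $H\psi/\psi$ with the $q$-level expression in \cref{eq:eloc_q} is exactly the identity the paper uses in its step (iii).
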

\begin{proof}
First, we derive the functional derivative of the energy functional $E[q]$.
We denote the differential of the functional $F(q)$ along the direction $h$ as
\begin{align}
    \dd F(q)[h] = \int \dx\; h\cdot \frac{\delta F[q]}{\delta q},
\end{align}
where $\delta F[q]/\delta q$ is the functional derivative.

Consider the energy functional
\begin{align}
    E[q] = \int \dx\; q \bigg[V -\frac{1}{4} \nabla_x^2 \log q - \frac{1}{8} \norm{\nabla_x \log q}^2\bigg].
\end{align}
The functional derivative of this functional is as follows
\begin{align*}
    \dd E(q)[h] = &~ \deriv{E(q+\eps\cdot h)}{\eps}\bigg|_{\eps = 0} \\
    = &~ \int \dx\; h \bigg[V -\frac{1}{4} \nabla_x^2 \log q - \frac{1}{8} \norm{\nabla_x \log q}^2\bigg] - \int \dx\; q\bigg[\frac{1}{4} \nabla_x^2 \frac{h}{q} + \frac{1}{4} \langle \nabla_x \log q, \nabla_x \frac{h}{q}\rangle\bigg].
\end{align*}
For the last term, we do integration by parts and get
\begin{align}
    \int \dx\; q\bigg[\frac{1}{4} \nabla_x^2 \frac{h}{q} + \frac{1}{4} \langle \nabla_x \log q, \nabla_x \frac{h}{q}\rangle\bigg] = -\frac{1}{4}\int \dx\; \inner{\nabla_x q}{\nabla_x \frac{h}{q}} + \frac{1}{4}\int \dx\; \inner{\nabla_x q}{\nabla_x \frac{h}{q}} = 0.
\end{align}
Thus, we have
\begin{align}
    \dd E(q)[h] = \int \dx\; h \underbrace{\bigg[V - \frac{1}{4} \nabla_x^2 \log q - \frac{1}{8} \norm{\nabla_x \log q}^2\bigg]}_{\delta E[q]/\delta q},
\end{align}
and we see that the derivative coincides with the local energy, i.e.,
\begin{align}
    \frac{\delta E[q]}{\delta q}(x) = E_{\textrm{loc}}(x) = V(x) - \frac{1}{4} \nabla_x^2 \log q(x) - \frac{1}{8} \norm{\nabla_x \log q(x)}^2.
\end{align}
Using the results from \cref{sec:gf_wfr}, the energy-minimizing gradient flow under Fisher--Rao metric is
\begin{align}
    \deriv{q_t(x)}{t} = ~& -\bigg[E_{\textrm{loc}}(x) - E[q_t]\bigg]q_t(x).
    \label{appeq:fisher_rao_gf}
\end{align}

Second, we derive the PDE for the time-evolution of the density $q_t$ under the imaginary-time Schrödinger equation.
\begin{align}
    \deriv{\psi_t}{t} = ~& \frac12\nabla_x^2 \psi_t - (V-E[q_t])\psi_t\\
    2\psi_t\deriv{\psi_t}{t} = ~& \psi_t\nabla_x^2 \psi_t - 2(V-E[q_t])\psi_t^2\\
    \deriv{q_t}{t} = ~& \psi_t\nabla_x^2 \psi_t - 2(V-E[q_t])q_t\\
\end{align}
Using the identity
\begin{align}
    \psi\nabla_x^2 \psi = ~& \psi\divx{\psi\nabla_x\log|\psi|} = \inner{\psi\nabla_x\psi}{\nabla_x\log|\psi|} + \psi^2\nabla_x^2\log|\psi| \\
    = ~& \frac{1}{4}\inner{\nabla_x q}{\nabla_x\log q} + \frac{1}{2}q\nabla_x^2\log q = 
    \frac{1}{4}q\norm{\nabla_x \log q}^2 + \frac{1}{2}q\nabla_x^2\log q,
\end{align}
we have
\begin{align}
    \deriv{q_t}{t} = ~& -2\bigg[V - \frac{1}{4}\nabla_x^2\log q - \frac{1}{8}\norm{\nabla_x \log q_t}^2 - E[q_t]\bigg]q_t\\
    \deriv{q_t(x)}{t} = ~& -2\bigg[E_{\textrm{loc}}(x) - E[q_t]\bigg]q_t(x),
\end{align}
which is equivalent to \cref{appeq:fisher_rao_gf}.
\end{proof}

\section{Following the gradient flow by a parametric model}
\label{app:proj_grad}

\begin{proposition*}
    For $q_t(x) = q(x,\theta)$, the KL-divergence can be expanded as
    \begin{align}
    \begin{split}
        \KL{q_{t+\Delta t}(x)}{q(x,\th+\Delta\th)}=~&-\fr 12\Delta t\inner{\Delta\th}{\int\pat q_{t}(x)\gr_\theta\log q(x,{\th})\dx} \\ 
        ~&+ o\left(\sqrt{\Delta t^2 + \norm{\Delta \theta}^2}\right)
    \end{split}
    \end{align}
\end{proposition*}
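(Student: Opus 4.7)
The plan is to Taylor-expand both arguments of the KL-divergence around the common reference density $q_t(x)=q(x,\theta)$, substitute into $\KL{q_{t+\Delta t}}{q(\cdot,\theta+\Delta\theta)}=\int q_{t+\Delta t}(x)\log\bigl[q_{t+\Delta t}(x)/q(x,\theta+\Delta\theta)\bigr]\dx$, and extract the leading cross contribution in $\Delta t$ and $\Delta\theta$. Writing $\varepsilon\coloneqq\sqrt{\Delta t^2+\norm{\Delta\theta}^2}$, the target is to show that once every other term is absorbed into the $o(\varepsilon)$ remainder, the surviving mixed term matches the stated expression.

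First I would expand $q_{t+\Delta t}(x)=q_t(x)+\Delta t\,\pat q_t(x)+o(\Delta t)$ and $q(x,\theta+\Delta\theta)=q(x,\theta)+\inner{\Delta\theta}{\gr_\theta q(x,\theta)}+o(\norm{\Delta\theta})$, relying on standard smoothness of $q(x,\theta)$ in $\theta$ and of $q_t(x)$ in $t$ together with dominated convergence to pass derivatives through the integral. Setting $u(x)\coloneqq\Delta t\,\pat\log q_t(x)$ and $v(x)\coloneqq\inner{\Delta\theta}{\gr_\theta\log q(x,\theta)}$, the log-ratio becomes $u(x)-v(x)+O(\varepsilon^2)$ and the outer measure factors as $q_t(x)[1+u(x)+O(\varepsilon^2)]$. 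Multiplying and integrating yields three types of contribution: the pure first-order integrals $\int q_t(x) u(x)\dx$ and $\int q_t(x) v(x)\dx$; a cross term proportional to $\int q_t(x)u(x)v(x)\dx$; and purely $\Delta t^2$ or purely $\norm{\Delta\theta}^2$ pieces from the outer weight and from the quadratic part of $\log(1+\cdot)$.

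The decisive simplification comes from differentiating the normalization $\int q_t(x)\dx=1$ in $t$ and in $\theta$, which gives $\int\pat q_t(x)\dx=0$ and $\int\gr_\theta q(x,\theta)\dx=0$. These identities annihilate both first-order integrals. The surviving mixed contribution is proportional to $\Delta t\,\inner{\Delta\theta}{\int\pat q_t(x)\,\gr_\theta\log q(x,\theta)\dx}$, exactly matching the right-hand side of the claim up to the numerical prefactor; every remaining piece is bounded by $\Delta t^2+\norm{\Delta\theta}^2=\varepsilon^2$, which is $o(\varepsilon)$ and folds into the stated remainder.

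The main obstacle I anticipate is accurately tracking the numerical prefactor of the cross term, which enters both from the $-\tfrac12(u-v)^2$ part of the $\log$ expansion and from the outer weight $(1+u)$ multiplied against $-v$; the two contributions must be combined without an off-by-two error. To avoid this I would organize the computation as a second-order Taylor expansion of $F(s,\alpha)\coloneqq\KL{q_{t+s}}{q(\cdot,\theta+\alpha)}$ around the origin: $F(0,0)$ and both first partials vanish automatically from the normalization identities, so the cross-term coefficient can be read off cleanly as $\partial_s\partial_\alpha F|_{(0,0)}$, after which it becomes transparent which second-order pieces are safely absorbed into $o(\varepsilon)$.
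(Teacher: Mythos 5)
Your plan is essentially the paper's own proof: the paper treats the KL-divergence as a function of $(t,\theta)$ jointly, shows both first partial derivatives vanish at $q_t(x)=q(x,\theta)$ via the normalization identities, and reads the result off the mixed second partial $\partial_\theta\,\partial_t\,\KL{q_{t}(x)}{q(x,\th)}=-\int \partial_t q_{t}(x)\,\gr_\theta\log q(x,\th)\dx$ --- exactly the organization you settle on in your final paragraph, with the direct integrand expansion being an equivalent computation. The prefactor you flag as the main obstacle is immaterial in context, since the proposition is only ever used up to a positive constant inside an arg-min over normalized $\Delta\theta$.
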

\begin{proof}
We have
\begin{align}
\KL{q_{t}(x)}{q(x,\th)} =~&\int q_{t}(x)\log\fr{q_{t}(x)}{q(x,\th)}\dx\\
\im\paf{\th}\KL{q_{t}(x)}{q(x,\th)} = ~&-\int q_{t}(x)\paf{\th}\log q(x,\th)\dx\\
 =~&\begin{cases}
0 & q_{t}(x)=q(x,\th)\\
-\int q_{t}(x)\paf{\th}\log q(x,\th)\dx & q_{t}(x)\ne q(x,\th)
\end{cases}
\end{align}
\begin{align}
\im\pat\KL{q_{t}(x)}{q(x,\th)} =~&\pat\int q_{t}(x)\log q_{t}(x)\dx-\int\pat q_{t}(x)\log q(x,\th)\dx\\
 =~&\int\pat q_{t}(x)\log q_{t}(x)\dx+\int q_{t}(x)\pat\log q_{t}(x)\dx \\
  ~& -\int\pat q_{t}(x)\log q(x,\th)\dx\\
 =~&\int\pat q_{t}(x)\log q_{t}(x)\dx+\int\pat q_{t}(x)\dx\\
  ~&-\int\pat q_{t}(x)\log q(x,\th)\dx\\
 =~&\int\pat q_{t}(x)\left[1+\log\fr{q_{t}(x)}{q(x,\th)}\right]\dx\\
 =~&\begin{cases}
0 & q_{t}(x)=q(x,\th)\\
\int\pat q_{t}(x)\left[1+\log\fr{q_{t}(x)}{q(x,\th)}\right]\dx & q_{t}(x)\ne q(x,\th)
\end{cases}\\
\im\pafss{\th}t\KL{q_{t}(x)}{q(x,\th)} =~&-\int\pat q_{t}(x)\gr\log q(x,\th)\dx
\end{align}
Thus at $q_{t}(x)=q(x,\th)$ we have
\begin{align}
\im\KL{q_{t+\Delta t}(x)}{q(x,\th + \Delta\th)}&\approx-\fr 12\Delta t\inner{\Delta\th}{\int\pat q_{t}(x)\gr_\theta\log q(x,{\th})\dx}\,.
\end{align}
\end{proof}

\section{Analogies to generative modeling literature}
\label{app:gen_models_connection}

To draw a connection to generative models literature, let us consider the KL-divergence as an objective to minimize (instead of the energy), i.e.
\begin{align}
    F[q] = \KL{p(x)}{q(x)}, \;\;\; \frac{\delta F[q]}{\delta q} = -\frac{p(x)}{q(x)},
\end{align}
where $p(x)$ is the data distribution given empirically.
Thus, we have two PDEs that define gradient flows of this functional.
Using equations \cref{eq:w2_evolution,eq:fisher_rao_evolution}, we have
\begin{align}
    \deriv{q_t}{t}(x) &= {\bigg(\frac{p(x)}{q_t(x)} - \mean_{q_t(y)} \left[\frac{p(y)}{q_t(y)}\right]\bigg) q_t(x)} = p(x) - q_t(x), & \text{Fisher--Rao Gradient Flow,}\\ 
    \deriv{q_t}{t}(x) &= {-\divx{q_t(x)\nabla_x \frac{p(x)}{q_t(x)}}},  & \text{2-Wasserstein Gradient Flow.}
\end{align}

Having the functional minimizing PDEs, we can define the corresponding loss functions using \cref{th:proj_grad} and \cref{eq:l2_theta_update}.
For the Fisher--Rao gradient flow, we have
\begin{align}
    \Delta \theta^*_{\text{FR}} = -\mean_{p(x)}\nabla_\theta \log q(x,\theta) + \mean_{q_t(x)} \nabla_\theta \log q(x,\theta).
\end{align}
Remember that in \cref{th:proj_grad} we use $q_t(x) = q(x,\theta)$ as the density equal to the model density but detached from the parameters $\theta$.
Hence, we have
\begin{align}
    \Delta \theta^*_{\text{FR}} = -\nabla_\theta \mean_{p(x)} \log q(x,\theta),
\end{align}
which corresponds to the conventional energy-based models training \citep{xie2016theory, du2019implicit}.

For the 2-Wasserstein gradient flow, denoting the detached density as $q_t(x) = q(x,\theta)$, we have
\begin{align}
    \Delta \theta^*_{W_2} =~& -\nabla_\theta\int dx\; q_t(x)\inner{\nabla_x \frac{p(x)}{q_t(x)}}{\nabla_x\log q(x,\theta)} \\
    =~&-\nabla_\theta\int \dx\; \inner{\nabla_x p(x)}{\nabla_x\log q(x,\theta)} + \nabla_\theta\int \dx\; p(x) \inner{\nabla_x\log q_t(x)}{\nabla_x\log q(x,\theta)} \\
    =~& \nabla_\theta\frac{1}{2} \mean_{p(x)} \norm{\nabla_x \log p(x) - \nabla_x \log q(x,\theta)}^2,
\end{align}
which corresponds to the score-matching objective \citep{hyvarinen2005estimation}.
This objective is actively used in the diffusion-based generative models \citep{song2020score}.

\section{c-Wasserstein gradient flow}
\label{app:cw_grad_flow}

$c$-Wasserstein distance with the convex cost function $c:\mathbb{R}^d \to \mathbb{R}$ is defined
\begin{align}
    \WCC{p_0}{p_1} = \inf_{\pi \in \Gamma(p_0,p_1)} \int \pi(x,y) c(x-y) \dx\dy,
\end{align}
where $\Gamma(p_0,p_1)$ is the set of all possible couplings of $p_0$ and $p_1$.
The dynamic formulation of this distance is the following
\begin{align}
    \WCC{p_0}{p_1} &\coloneqq \inf_{v_t, q_t}  \int_0^1 \; \mathbb{E}_{q_t(x)} 
    \left[ c(v_t(x)) \right] \dt,  \;\;\text{ subj. to} \\
    \deriv{q_t(x)}{t} &= 
    - \divx{q_t(x) v_t(x)}\,,  \quad \text{ and } \quad q_0 = p_0, \;\; q_1  = p_1\,. 
\end{align}

\begin{proposition*}
    The energy-minimizing $c$-Wasserstein gradient flow is defined by the following PDE
    \begin{align}
        \deriv{q_t(x)}{t} = -\divx{q_t(x) \nabla c^*\left(-2\nabla_x E_{\textrm{\emph{loc}}}(x)\right)}\,,
    \end{align}
    where $c^*(\cdot)$ is the convex conjugate function of $c(\cdot)$.
\end{proposition*}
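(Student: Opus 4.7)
The plan is to mirror the minimizing movement scheme (MMS) derivation of Appendix~A (for the $W_2$ case), but with the quadratic kinetic action $\tfrac{1}{2}\|v\|^2$ replaced by the convex action $c(v)$, and then solve the resulting pointwise optimization using the Fenchel conjugate $c^*$.

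First, I would set up the MMS
\begin{align*}
\inf_{q'}\ F[q'] - F[q] + \frac{1}{2\Delta t}\WCC{q}{q'}
\end{align*}
with $F = E$, and use the dynamic formulation to parametrize feasible updates by a velocity field $v$, so that to leading order $q'(x) = q(x) - \Delta t\,\divx{q(x) v(x)} + o(\Delta t)$ and $\WCC{q}{q'} = \Delta t^2 \,\mathbb{E}_{q(x)}[c(v(x))] + o(\Delta t^2)$. A first-order expansion of $F[q']$ together with integration by parts (exactly as in Appendix~A) converts this into the pointwise-in-$x$ problem
\begin{align*}
\inf_{v} \int \dx\; q(x)\left[\inner{v(x)}{\nabla_x \tfrac{\delta F[q]}{\delta q}(x)} + \tfrac{1}{2} c(v(x))\right].
\end{align*}

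Next I would solve this pointwise. Letting $z(x) = \nabla_x \tfrac{\delta F[q]}{\delta q}(x)$, for each fixed $x$ the minimization is $\inf_v \inner{v}{z} + \tfrac{1}{2}c(v)$. By definition of the convex conjugate, $\inf_v [\inner{v}{z} + \tfrac{1}{2}c(v)] = -\sup_v [\inner{v}{-z} - \tfrac{1}{2}c(v)]$, and the scaling identity $(\alpha c)^*(y) = \alpha\, c^*(y/\alpha)$ with $\alpha = \tfrac{1}{2}$ gives that the supremum equals $\tfrac{1}{2} c^*(-2z)$, with minimizer $v^*(x) = \nabla c^*(-2 z(x))$. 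This is the step I expect to require the most care: keeping track of the factor of $2$ produced by the $\tfrac{1}{2}$ in the MMS action, and checking that $\nabla c$ and $\nabla c^*$ are genuine inverses (assuming $c$ is strictly convex and differentiable, which is a standing regularity assumption).

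Finally, I would invoke the computation from Proposition~3.1 that $\tfrac{\delta E[q]}{\delta q}(x) = E_{\text{loc}}(x)$, so $z(x) = \nabla_x E_{\text{loc}}(x)$ and therefore $v^*(x) = \nabla c^*(-2\nabla_x E_{\text{loc}}(x))$. Plugging $v^*$ into the continuity equation $\partial_t q_t = -\divx{q_t v^*}$ yields the claimed PDE. As a sanity check, for $c(\cdot) = \tfrac{1}{2}\|\cdot\|^2$ one has $c^*(y) = \tfrac{1}{2}\|y\|^2$ and $\nabla c^*(y) = y$, so $v^* = -2\nabla_x E_{\text{loc}}$; the overall factor is absorbed by the convention that Proposition~3.2 uses the action $\mathbb{E}_{q_t}[\|v_t\|^2]$ rather than $\tfrac{1}{2}\mathbb{E}_{q_t}[\|v_t\|^2]$, so the extra factor of $2$ is consistent. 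The main obstacle is therefore purely bookkeeping of the normalization constants, not the geometric derivation itself.
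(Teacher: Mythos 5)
Your proposal matches the paper's own proof essentially step for step: the same minimizing-movement scheme with the dynamic $c$-Wasserstein action, the same first-order expansion of $F$ with integration by parts, and the same pointwise Legendre-transform step yielding $v^*(x)=\nabla c^*\left(-2\nabla_x E_{\textrm{loc}}(x)\right)$ (the paper simply factors the $\tfrac12$ out of the integrand rather than invoking the scaling identity $(\alpha c)^*(y)=\alpha\,c^*(y/\alpha)$, but this is the same computation). Your sanity check on the quadratic special case and the factor-of-two bookkeeping is likewise consistent with the paper's remark that the $W_2$ result is the case $c(\cdot)=\norm{\cdot}^2$.
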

\begin{proof}
The movement minimizing scheme for $\WC{\cdot}{\cdot}$ is the following optimization problem
\begin{align}
    \inf_{q'} F[q'] - F[q] + \frac{1}{2\Delta t}\WCC{q}{q'}.
\end{align}
Assuming that the density changes according to the continuity equation $q' = q - \Delta t\inner{\nabla_x}{q(x) v(x)}$, and $\Delta t$ is small enough so that $v(x)$ defines the optimal transportation plan, we have
\begin{align}
    \inf_{v} &~ F[q] - \Delta t\int \divx{q(x) v(x)}\frac{\delta F[q]}{\delta q}(x) \dx - F[q] + \frac{1}{2\Delta t}\Delta t^2\mean_{q(x)}c(v(x))\\
    = &~ \Delta t\inf_{v} \int q(x) \inner{v(x)}{\nabla_x\frac{\delta F[q]}{\delta q}(x)} \dx + \frac{1}{2}\mean_{q(x)}c(v(x))\\
    =&~ \frac{1}{2} \Delta t\inf_{v} \int q(x) \bigg[c( v(x)) - \inner{v(x)}{- 2\nabla_x\frac{\delta F[q]}{\delta q}(x)}\bigg] \dx\\
    =&~ -\frac{1}{2}\Delta t\int q(x) c^*\left(-2\nabla_x\frac{\delta F[q]}{\delta q}(x)\right) \dx
\end{align}
and the infimum is achieved at
\begin{align}
    v(x) = \nabla c^*\left(-2\nabla_x\frac{\delta F[q]}{\delta q}(x)\right),
\end{align}
which gives the formula for the vector field.
Using the energy gradient from \cref{prop:fisher_rao_gf} $\frac{\delta E[q]}{\delta q}(x) = E_{\textrm{loc}}$, we get the result.

\end{proof}

\begin{proposition*}
    Coordinate-wise application of $\tanh$ to the vector field, i.e.
    \begin{align}
        \deriv{q_t(x)}{t} = -\divx{q_t(x) \tanh\left(-\nabla_x E_{\textrm{\emph{loc}}}(x)\right)}\,,
    \end{align}
    corresponds to gradient descent with $c$-Wasserstein distance, where $c:\mathbb{R}^d \to \mathbb{R}$ is the following cost function
    \begin{align}
        c(x) = \sum_i^d \frac{1}{2}\bigg((x_i+1) \log (x_i+1) + (1-x_i) \log (1-x_i)\bigg) - d\log 2.
    \end{align}
\end{proposition*}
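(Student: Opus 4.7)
The plan is to invoke the preceding proposition on $c$-Wasserstein gradient flow, which reduces the statement to the purely analytic task of computing $\nabla c^*$ for the given separable cost and verifying that it equals (coordinate-wise) the hyperbolic tangent. Since the proposition above tells us that the energy-minimizing $c$-Wasserstein gradient flow has the form $\partial_t q_t = -\nabla_x\cdot\bigl(q_t(x)\,\nabla c^*(-2\nabla_x E_{\textrm{loc}}(x))\bigr)$, it suffices to show that $\nabla c^*(y)$ equals $\tanh$ applied coordinate-wise (up to the appropriate scaling by 2).

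The key structural observation is that the proposed $c$ is \emph{separable}: $c(x) = \sum_{i=1}^{d} c_0(x_i) - d\log 2$ with $c_0(s) = \tfrac{1}{2}\bigl[(1+s)\log(1+s) + (1-s)\log(1-s)\bigr]$, defined on $(-1,1)$ and extended to $+\infty$ outside. A short calculation gives $c_0''(s) = 1/(1-s^2) > 0$, so $c_0$ is strictly convex and smooth on its domain, and the additive constant $-d\log 2$ is irrelevant for the conjugate's gradient. By standard Legendre duality for separable convex functions, $c^*(y) = \sum_i c_0^*(y_i)$ and $\partial_{y_i} c^*(y) = (c_0')^{-1}(y_i)$.

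Now I would compute $c_0'(s) = \tfrac{1}{2}\log\tfrac{1+s}{1-s} = \tanh^{-1}(s)$, so that $(c_0')^{-1} = \tanh$. Consequently $\nabla c^*(y) = \bigl(\tanh(y_1), \dots, \tanh(y_d)\bigr)$, and substituting $y = -2\nabla_x E_{\textrm{loc}}(x)$ into the $c$-Wasserstein flow from the previous proposition yields the stated PDE, modulo a factor-of-$2$ inside $\tanh$ that is absorbed into the convention (any rescaling of $c$ by a constant multiplicatively rescales the argument of $\tanh$ in an obvious way, which one can track by noting $(\alpha c_0)^{*\prime}(y) = \tanh(y/\alpha)$).

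The main obstacles I anticipate are (i) pinning down the correct scaling so the $\tanh$ appears with the intended argument --- this requires careful bookkeeping of the factor of $2$ coming from the $-2\nabla_x E_{\textrm{loc}}$ term inside the conjugate, and a possible rescaling of $c$ by $1/2$; and (ii) justifying the use of Legendre duality globally: because $c_0$ is only finite on $(-1,1)$, one must either restrict the velocity field to take values in $(-1,1)^d$ (which is natural here since the claimed $v = \tanh(\cdot)$ automatically lies in $(-1,1)^d$), or work with extended-real-valued convex functions and their conjugates on all of $\mathbb{R}^d$. Both formalisms give the same result and the extended-real-valued approach is standard, so this is routine.
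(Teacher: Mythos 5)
Your proposal is correct and rests on the same core computation as the paper's proof, namely verifying that the stated $c$ and $c^*(y)=\sum_i\log\left(e^{y_i}+e^{-y_i}\right)$ form a Legendre pair; you just run the computation in the opposite direction. The paper starts from $c^*$, observes $\partial_i c^*(y)=\tanh(y_i)$, and recovers $c$ by explicitly evaluating the supremum $c(x)=\sup_y\inner{x}{y}-c^*(y)$ at $y_i=\tfrac12\log\tfrac{1+x_i}{1-x_i}$, whereas you start from $c$, compute $c_0'(s)=\tfrac12\log\tfrac{1+s}{1-s}=\tanh^{-1}(s)$, and invoke the standard identity $(c_0^*)'=(c_0')^{-1}=\tanh$ for a strictly convex differentiable function; both are one-line calculations and equally valid, and your domain remark about $c_0$ being finite only on $[-1,1]$ is correctly dismissed as routine. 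One point where your treatment is actually more careful than the paper's: combining the preceding $c$-Wasserstein proposition, whose velocity is $\nabla c^*(-2\nabla_x E_{\textrm{loc}}(x))$, with $\nabla c^*=\tanh$ literally yields $\tanh(-2\nabla_x E_{\textrm{loc}}(x))$ rather than the stated $\tanh(-\nabla_x E_{\textrm{loc}}(x))$; the paper's proof is silent on this factor of $2$, while your bookkeeping via $(\alpha c_0)^{*\prime}(y)=\tanh(y/\alpha)$ (so that taking the cost $2c$ absorbs it) identifies and resolves the discrepancy. That is a genuine, if minor, improvement over the published argument.
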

\begin{proof}
Consider $c^*(x) = \sum_i \log(\exp(x_i) + \exp(-x_i))$.
It corresponds to applying hyperbolic tangent non-linearity coordinate-wise to the vector field field, i.e.,
\begin{align}
    \partial_i c^*(x) = \frac{\exp(x_i) - \exp(-x_i)}{\exp(x_i) + \exp(-x_i)} = \tanh(x_i).
\end{align}
The corresponding cost function $c(x)$ is the following
\begin{align}
    c(x) =~& \sup_y \inner{x}{y} - c^*(y) \\
    =~& \sup_y \sum_i^d \bigg(x_i y_i - \log(\exp(y_i) + \exp(-y_i))\bigg)\\
    =~& \sup_y \sum_i^d \bigg((x_i+1) y_i - \log(\exp(2y_i) + 1)\bigg) \qquad //y_i = \frac{1}{2}\log \frac{1+x}{1-x}\\
    =~& \sum_i^d \bigg((x_i+1) \frac{1}{2}\log \left(\frac{1+x_i}{1-x_i}\right) - \log\left(\frac{1+x_i}{1-x_i} + 1\right)\bigg)\\
    =~& \sum_i^d \frac{1}{2}\bigg((x_i+1) \log (x_i+1) + (1-x_i) \log (1-x_i)\bigg) - d\log 2.
\end{align}    
\end{proof}

\newpage

\section{Additional experimental results}
\label{app:timing}
\pagenumbering{gobble}
\begin{figure}[h]
    \centering
    \includegraphics[width=\textwidth]{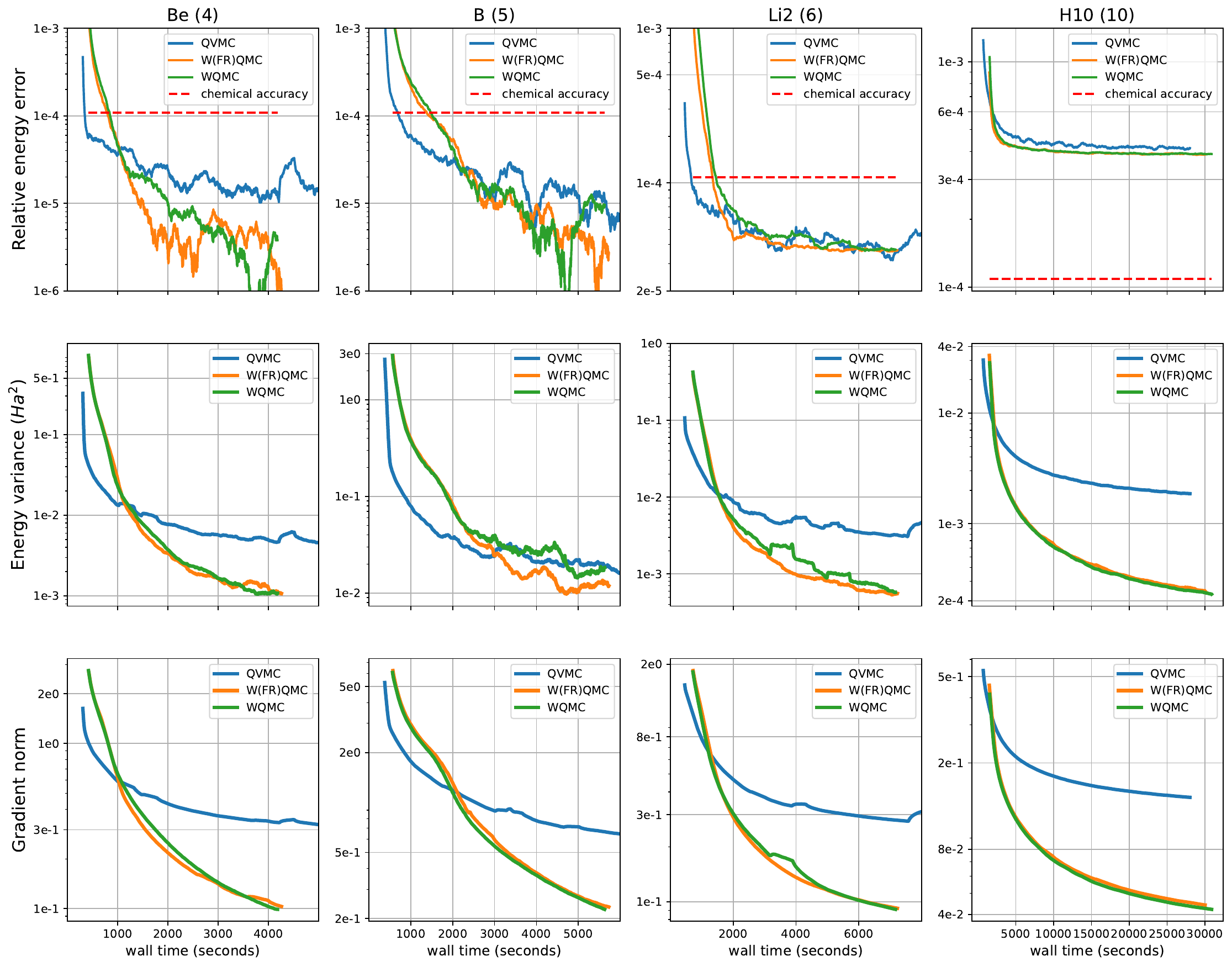}
    \vspace{-5pt}
    \caption{
    Optimization results for different chemical systems (every column corresponds to a given molecule). 
    The number of electrons is given in the brackets next to systems' names.
    Throughout the optimization, we monitor three values: the mean value of the local energy (lower is better), the variance of the local energy, and the median value of the gradient norm of the local energy.
    In the first row of plots, we average (removing $5\%$ of outliers from both sides) the energy over $1000$ iterations and report the relative error to the actual ground-state energy: $(E-E_0)/E_0$.
    In the second row, we report standard deviation averaged over $1000$ iterations (removing $5\%$ of outliers from both sides).
    In the third row, we report the median gradient norm averaged over $1000$ iterations (removing $5\%$ of outliers from both sides).
    All the metrics are plotted versus the wall time of computations performed on four Nvidia A40 GPUs.
    }
    \label{fig:runtime_plots}
\end{figure}

\end{document}